\def\markboth#1#2{\def\leftmark{\@IEEEcompsoconly{\sffamily}\MakeUppercase{\protect#1}}%
\def\rightmark{\@IEEEcompsoconly{\sffamily}\MakeUppercase{\protect#2}}}
\newcommand{\Hb}{\mathbf{H}}
\newcommand{\A}{\mathbf{A}}
\newcommand{\I}{\mathbf{I}}
\newcommand{\R}{\mathbf{R}}
\newcommand{\D}{\mathbf{D}}
\newcommand{\one}{\mathbf{1}}
\newcommand{\x}{\mathbf{x}}
\newcommand{\s}{\mathbf{s}}
\newcommand{\q}{\mathbf{q}}
\newcommand{\uu}{\mathbf{u}}
\newcommand{\bb}{\mathbf{b}}
\newcommand{\w}{\mathbf{w}}
\newcommand{\lambdab}{\mathbf{\boldsymbol{\lambda}}}
\newcommand{\ab}{\mathbf{a}}
\newcommand{\pp}{\mathbf{p}}
\newcommand{\Ex}[2]{{\textnormal{E}_{#1}\left[#2\right]}}
\newtheorem{definition}{Definition}
\newtheorem{lemma}{Lemma}
\newtheorem{theorem}{Theorem}
\newtheorem{remark}{Remark}
   \definecolor{blueH3}{rgb}{0,.5,1}
   \definecolor{blueH2}{rgb}{0,0.25,0.75}
   \definecolor{blueH1}{rgb}{0,0,0.5}
   \definecolor{grayOldText}{rgb}{.5,.5,.5}
   \definecolor{VCobalt}{HTML}{005682}
   \definecolor{TZTeal}{HTML}{008080}
   \definecolor{KYJade}{HTML}{008151}
   \definecolor{ARust}{HTML}{a10000}
\newcounter{MYtempeqncnt}
\newcommand{\inlineeqnum}{\refstepcounter{equation}~~\mbox{(\theequation)}}
\title{Optimal Link Scheduling in Millimeter Wave Multi-hop Networks with MU-MIMO radios.}
\author{
Felipe G\'omez-Cuba$^1$, \emph{Member, IEEE}, Michele Zorzi$^1$, \emph{Fellow, IEEE}, 
\thanks{This work extends results presented at the IEEE Information Theory and Applications Workshop (ITA), La Jolla, February 2016~\cite{gomezITAoptimal}. $^1$: Felipe G\'omez-Cuba and Michele Zorzi are with Dipartimento di Ingegneria dell'Informazione, University of Padova, Via Gradenigo 6/b, 35131 - Padova Italy, (e-mail: {\tt \{gomezcuba,zorzi\}@dei.unipd.it}). This project has received funding from the European Union's Horizon 2020 research and innovation programme under the Marie Sk\l{}odowska-Curie grant agreement No 704837.}
}
\begin{document} 
\maketitle
\markboth{DRAFT}{DRAFT}
\begin{abstract}
This paper studies the maximum throughput achievable with optimal scheduling in multi-hop mmWave picocellular networks with Multi-user Multiple-Input Multiple-Output (MU-MIMO) radios. MU-MIMO enables simultaneous transmission to multiple receivers (Space Division Multiplexing) and simultaneous reception from multiple transmitters (Space Division Multiple Access). The main contribution is the extension to MU-MIMO of the Network Utility Maximization (NUM) scheduling framework for multi-hop networks. We generalize to MU-MIMO the classic proof that Maximum Back Pressure (MBP) scheduling is NUM optimal. MBP requires the solution of an optimization that becomes harder with MU-MIMO radios. In prior models with one-to-one transmission and reception, each valid schedule was a \textit{matching} over a graph. However, with MU-MIMO each valid schedule is, instead, a \textit{Directed Bipartite SubGraph} (DBSG). In the general case this prevents finding efficient algorithms to solve the scheduler. We make MU-MIMO MBP scheduling tractable by assuming fixed power allocation, so the optimal scheduler is the Maximum Weighted DBSG. The MWDBSG problem can be solved using standard Mixed Integer Linear Programing. We simulate multi-hop mmWave picocellular networks and show that a MU-MIMO MBP scheduler enables a 160\% increase in network throughput versus the classic one-to-one MBP scheduler, while fair rate allocation mechanisms are used in both cases.
\end{abstract}

\begin{IEEEkeywords}
5G, Millimeter Wave, Beamforming, Space Division Multiplexing, Dynamic Duplexing, Scheduling, Network Utility Maximization
\end{IEEEkeywords}

\section{Introduction}
\label{sec:introduction}

Millimeter wave (mmWave) frequency bands ($30$-$300$ GHz) have been proposed as a candidate to satisfy the spectrum demands of fifth generation (5G) cellular wireless networks \cite{rappaport2013,RanRapEr:14}. mmWave frequencies offer a $200\times$ increase in available spectrum, allowing channel bandwidths in the GHz range. In addition, thanks to the short wavelength, mmWave radios can pack dozens of antennas even in compact mobile devices. The main drawbacks of mmWave are that pathloss, absorption, and blockages are more severe than in traditional cellular bands. Although the large antenna arrays compensate some of the harsh propagation conditions, the typical range of mmWave devices is about $100$ m. This means that mmWave cellular systems require ultra-dense picocellular deployments with numerous Access Points (AP) per unit area. Since fiber backhauling is costly at such densities, many of those APs will have wireless Integrated Access and Backhaul (IAB) in the same mmWave band \cite{8647977,polese2019integrated}, acting as Relay Nodes (RN) rather than Base Stations (BS)  as shown in Fig. \ref{fig:picocell}.

\begin{figure}
 \centering
 \includegraphics[width=.7\columnwidth]{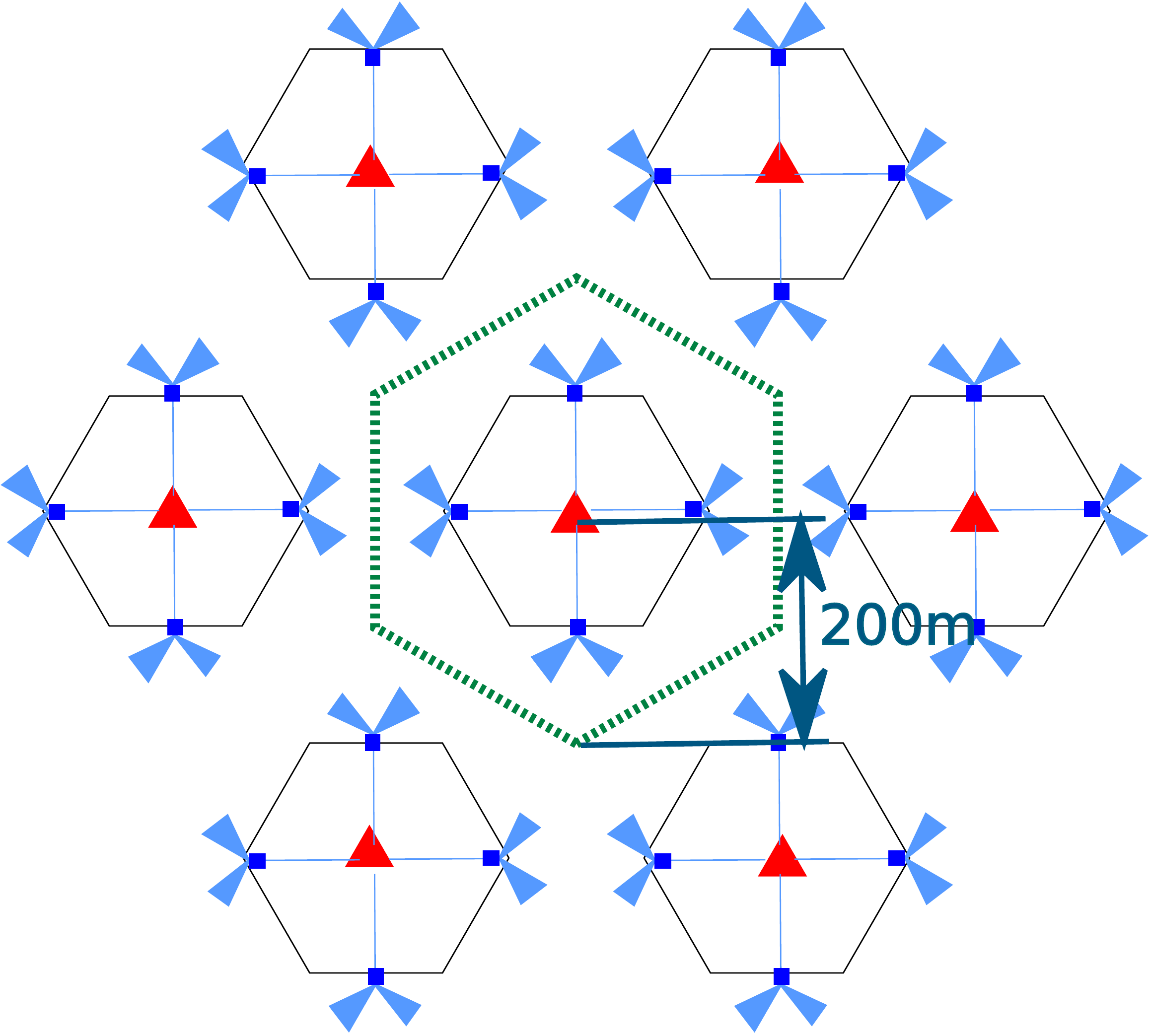}
 \caption{Multi-hop mmWave picocellular network with BSs (red triangles), IAB RNs (blue squares), and UEs (not depicted) uniformly distributed in each BS's ``cell'' with radius $200$ m, making inter-BS distance $346$ m.}
 \label{fig:picocell}
\end{figure}

Current cellular systems achieve spatial multiplexing rate gains at the physical layer through Multi-User MIMO (MU-MIMO) techniques \cite{Gesbert2007a,hoydis2011massive,Bjornson2016} that use multiple antennas to simultaneously transmit towards multiple receivers using Space Division Multiplexing (SDM) or simultaneously receive from multiple transmitters using Space Division Multiple Access (SDMA) \cite{Gesbert2007a,hoydis2011massive,Bjornson2016}.

As an evolution of current cellular systems, mmWave picocellular systems with large antenna arrays are a prime candidate to exploit MU-MIMO at the physical layer, while IAB RNs are required due to the limited ranges. Therefore, a complete scheduling model for future mmWave picocellular systems with IAB RNs should feature both multi-hop and MU-MIMO, and optimize the entire system treated as a mesh multi-hop network comprising wired-backhaul BSs, IAB RNs, and User Equipment (UE) \cite{russellDynamic,juanScheduling}, and where simultaneous transmissions and receptions are possible. Unfortunately, most models in the literature considered either single-hop MU-MIMO \cite{Gesbert2007a,hoydis2011massive,Bjornson2016,Kwon2016b} or multi-hop omnidirectional single-antenna systems \cite{Tass,Eryilmaz2007,Kelly1997,Kelly1998,ModianoPower}. The former model can represent conventional cellular networks with single-hop topologies. On the other hand, the latter model represents the MAC scheme in multi-hop sensor and ad hoc WiFi networks, which is based on a single-antenna omnidirectional collision model. Thus, the main goal of our paper is to study the novel scheduling framework with simultaneous support for multi-hop and MU-MIMO as is necessary in future mmWave IAB picocellular systems.

In this paper, we propose and analyze a complete model for scheduling in a mesh-topology multi-hop mmWave picocellular network with MU-MIMO at both transmitters and receivers. Our model characterizes the long term throughput-optimal scheduling and rate control for arbitrary MU-MIMO mesh networks using the Network Utility Maximization (NUM) framework \cite{Tuto,juanScheduling}. We use this model to evaluate the capacity of multi-hop mmWave picocellular mesh networks with IAB RNs under optimal scheduling. We extend to MU-MIMO a classic analysis that studied a Markov process representing the state of the network, proving that when the scheduler adopts a Maximum Back Pressure (MBP) policy and the source rates are controlled by an Adaptive NUM Congestion Control (ANCC) scheme, the network is stable  (all network states with non-zero probability correspond to finite queues) and the long-term average user rates converge to the maximum of a certain utility metric, which we can choose following fairness or service differentiation criteria. The MBP policy selects the links that have the highest ``queue pressure,'' i.e., difference in the queue length at the transmitter and the receiver.

Even though our theoretical result proves that MU-MIMO MBP scheduling is throughput and NUM optimal for any arbitrary MU-MIMO multi-hop network, this still leaves as a challenge the implementation of the scheduler as a non-convex optimization problem. In classic ad hoc network NUM literature \cite{Tass,Eryilmaz2007,Kelly1997,Kelly1998,ModianoPower} and in the one-to-one-constrained mmWave network model \cite{juanScheduling}, each possible schedule was a matching\footnote{A matching in a graph is defined as a set of edges that have no vertices in common.} over the graph that represents the network. Thus, in prior work, the Maximum Weighted Matching (MWM) algorithm from classical graph theory could be exploited to implement the MBP scheduler with polynomial complexity. Nevertheless, in the MU-MIMO case, due to the fact that transmissions/receptions to/from multiple receivers/transmitters at once are enabled, all valid schedules are not ``matchings'' but rather ``Directed Bipartite SubGraphs'' (DBSG) of the network graph. In the most general case, the link rates vary when different DBSGs of the network are selected due to power allocation at the transmitters, and the implementation of the optimum MBP scheduler is challenging.

Besides the theoretical contribution of generalizing the proof to MU-MIMO, the second main contribution in this paper is the evaluation of the capacity of MU-MIMO mmWave multi-hop picocellular networks with IAB RNs. For this we propose an assumption that makes the scheduler a tractable optimization problem, while being consistent with existing mmWave transceiver circuit designs. Specifically, we assume that power allocation in all transmissions is pre-selected to a fixed value, which is realistic when hybrid analog-digital beamforming ports with independent power amplifiers are used in the mmWave transceiver. With this assumption, link rates become fixed weights, and the MU-MIMO MBP scheduler is reduced to a \textbf{Maximum Weighted DBSG} problem which is tractable using Mixed Integer Linear Programming (MILP).

In our capacity evaluation we compared our MU-MIMO - MILP mmWave multi-hop picocellular scheduling model versus the prior one-to-one - MWM mmWave multi-hop picocellular scheduling model in \cite{juanScheduling}. We found that over a simulation campaign of 50 randomly generated networks with 1 BS, 4 RNs and 10 UEs each, on average the introduction of MU-MIMO physical layer techniques enabled a 160\% increase in sum rate in the cell, while fairness among the users remained similar. Thus, support of MU-MIMO multi-hop scheduling can dramatically increase the performance of future mmWave picocellular networks with IAB RNs. The main engineering lesson of our analysis is that the capacity of future IAB mmWave cellular systems can be dramatically increased by combining MU-MIMO and optimal multi-hop scheduling.

The rest of this paper is organized as follows. Section \ref{sec:model} describes the system model. Section \ref{sec:optimization} describes the general NUM problem statement and the general theoretical result. Section \ref{sec:mmWaveImplementation} describes the simplified implementation of the MU-MIMO MBP optimal scheduler for mmWave systems. Section \ref{sec:numeric} provides simulation results and compares the performance of our MU-MIMO scheduling model versus the prior one-to-one scheduling models, showing that exploiting MU-MIMO can dramatically increase network throughput. Finally, Section \ref{sec:conclusion} concludes the paper.

\subsection{Related Work}

MmWave propagation is studied in \cite{rappaport2013,RanRapEr:14,Rappaport2015}. Hybrid analog-digital antenna array transceiver architectures and MU-MIMO schemes are discussed in\cite{Hur2013,Rappaport2014mimo,Samsung2014,Kutty2015}. Neighbor discovery and channel estimation for mmWave have been extensively studied, see for example \cite{Barati2015,Xiao2017a,gomezcuba2019CSicc}. Single-hop Urban Micro-Cell capacity evaluations are implemented in \cite{Akdeniz2013,Akdeniz2014}.

Cellular networks first introduced multi-hop RNs in the Third Generation Partnership Project (3GPP) Long Term Evolution - Advanced (LTE-A) Release 10 \cite{LTE36216PHYrelay}. In LTE-A, RN scheduling is quite restricted, as RNs can only be connected to one BS and they need to maintain backwards-compatibility with the fixed uplink (UL) and downlink (DL) parts of the LTE frame, which was designed for single-hop systems \cite{fgomez2014improvedrelaying}. Significant work has been done in tractable cellular capacity evaluations with relaying \cite{andrews2014selfbackhaul} using stochastic geometry, however these models have assumed a simplified two-hop tree topology as in LTE-A relaying, where scheduling is drastically simplified to the mere adjustment of the static time-sharing factor between the two hops.

Unlike in the sub-$6$ GHz bands, in mmWave the severe pathloss and directivity typically result in reduced interference, whereas the noise power increases as it is proportional to the bandwidth. Thus in mmWave transmissions the DL-to-UL inter-cell interference is not as strong and a global static UL/DL separation as in LTE is not required \cite{7499308,juanScheduling}. This means that 
the scheduler can exploit \textit{Dynamic Duplex} \cite{russellDynamic,juanScheduling}. In fact, the more recent 3GPP New Radio (NR) specification incorporates ``flexible'' slots that can be scheduled as either DL or UL \cite{3gpp38211}. Moreover, the IAB study item for future NR revisions has considered more than two hops \cite{8647977}. In \cite{russellDynamic} mmWave multi-hop scheduling with an arbitrary number of hops and dynamic duplex was studied. In \cite{russellDynamic} frequency-domain multiplexing was considered but not MU-MIMO, and the topology was still limited to a predefined tree where UEs cannot communicate with multiple APs and problems such as optimal tree formation, user attachment or routing were not addressed.

Moving away from the tree topology, in \cite{juanScheduling,Niu2017} multi-hop mmWave picocellular scheduling with a full mesh topology was considered, however in these works neither form of simultaneous transmission using MU-MIMO or frequency division was allowed. In an earlier work \cite{gomezITAoptimal} we considered the same problem introducing MU-MIMO for the receivers but not for the transmitter.

The present work provides a NUM analysis with MBP scheduling that extends \cite{juanScheduling,gomezITAoptimal}. MBP was shown to achieve network stability in single-hop ad hoc networks with fixed arrival rates and fixed link rates in \cite{Tass}. The NUM congestion control technique was introduced to maximize the user arrival rates in \cite{Kelly1997,Kelly1998}. The framework was generalized to multi-hop networks in \cite{Eryilmaz2007}. All these works assumed constant link rates, which is realistic in some cases such as sensor networks where nodes perform power control (reducing transmission power to the minimum necessary for a fixed link rate in order to improve battery life). NUM was extended to networks where links had a random transmit power in \cite{ModianoPower}, and QoS and delay were introduced in \cite{ZhouDelay2012}.

Unfortunately the existing body of work on NUM has predominantly applied to networks with a simple physical layer \cite{6615900}, e.g., considering single-antenna  omnidirectional radios incapable of decoding two simultaneous transmissions (interference-as-collision model). In order to apply the existing NUM know-how to cellular systems the above results need revision. In \cite{juanScheduling} NUM for mmWave multi-hop heterogeneous cellular networks is considered without MU-MIMO, leaving out one key technology in cellular modern physical layers. Nevertheless, \cite{juanScheduling} produced two analyses with and without interference in the mmWave multi-hop picocellular system, showing that the effect of interference was negligible. We verified their observations by simulating their algorithm, so in this paper we skip directly to a model assuming that interference is negligible in order to make the extension of the model in \cite{juanScheduling} to MU-MIMO tractable.

\subsection{Notation}
Calligraphic letters denote sets. $|\mathcal{A}|$ is the cardinality of $\mathcal{A}$. $\text{Int}\{\mathcal{A}\}$ is the interior region of $\mathcal{A}$. Script letters denote functions. $\dot{\mathscr{A}}^{-1}(x)$ is the inverse derivative of ${\mathscr{A}}(x)$. Bold uppercase and lowercase letters denote matrices and vectors, respectively. $\A^T$ is the transpose and $\A^H$ the Hermitian of $\A$. $\|\A\|_n=\left(\sum_{i,j}|a_{i,j}|^n\right)^{\frac{1}{n}}$ is the $\ell_n$ norm, where $\|\A\|=\|\A\|_2$. $\one_{N,M}$ is the $N\times M$ ``all-ones'' matrix. $\Ex{\A}{.}$ is the expectation with respect to the distribution of $\A$. $\textnormal{stack}(\ab_1\dots\ab_M)$ is the vector formed by stacking the $M$ vectors $\ab_1\dots\ab_M$ and $\text{Co}\{\A_1\dots\A_M\}$ is the set of all linear combinations of $\A_1\dots\A_M$.

\section{System Model}
\label{sec:model}

We represent a multi-hop wireless network as a directed graph $\mathcal{G}(\mathcal{N},\mathcal{L})$, where $\mathcal{N}$ is the set of all nodes (including BSs, RNs and UEs), and $\mathcal{L}$ is the set of all links. We also represent a set of differentiated \textit{traffic flows} in the network, $\mathcal{F}$. We denote the indices of the elements in each set by $n$, $\ell$ and $f$ respectively, and the cardinalities of the sets as $N$, $L$ and $F$. 

The set of links $\mathcal{L}$ contains one element per each pair of devices that can reach each other as transmitter and receiver. UEs can attach to multiple APs at the same time (RNs or BSs), RNs can communicate with other RNs, and BSs can communicate with any UE and RN in their range. Only UE-UE connections are forbidden. BSs differ from RNs in that they are connected to a wired backhaul. In general this model admits the definition of any arbitrary traffic flows. However, without loss of generality, we shall present our numerical simulations assuming there are 2 traffic flows per UE in the system: one for downlink data, with source at the BS and destination at the UE, and one for uplink data, with source at the UE and destination at the BS.

We assume that each node $n$ has knowledge of the set of neighbors connected to it, denoted by $\mathcal{A}(n)$, and of the corresponding channel coefficients. The maximum degree of the graph is $A_{\max}=\max_n|\mathcal{A}(n)|$. In mmWave, neighbor detection and channel estimation overhead is greatly reduced thanks to compressed sensing \cite{gomezcuba2019CSicc}. However, in order to fully exploit the antenna arrays at the UEs using MU-MIMO, we also assume a  ``multi-attachment'' mesh topology, that is, each UE and RN can be linked to more than one AP. This increases the estimation overhead because each node estimates more than one channel. The net effect on channel estimation overhead combining compressed sensing and multi-attachment is out of the scope of this paper.

We assume that each node uses the half-duplex MU-MIMO hybrid analog-digital mmWave physical layer architecture described in Appendix \ref{app:phy}. From the point of view of the network scheduling, the use of this MU-MIMO physical layer means that we assume that each node $n$ has $K(n)$ ``antenna ports.'' Half-duplex MU-MIMO means that each node is capable of transmitting $K(n)$ simultaneous signals to or receiving $K(n)$ signals from $K(n)$ neighbors at the same time. For RNs, these $K(n)$ simultaneous transmissions include the backhaul link with BS, which cannot be permanently in use and is scheduled with the same constraints as the links with UEs. We assume that $K(n)\geq |\mathcal{A}(n)|\;\forall n$ and thus transmission to or reception from all neighbors can be performed at once, but not both due to the half-duplex constraint. We leave the extension of our model to full duplex communications for future work. We also leave for future work the extension to the case $1<K(n)< |\mathcal{A}(n)|$, which would imply that MU-MIMO is possible but not all neighbors can be addressed at once. On the other hand, if $K(n)=1$ there is no MU-MIMO and each node can only communicate with one neighbor at a time (hereafter, ``one-to-one constraint'') as was assumed in most non-mmWave NUM literature \cite{Tass,Eryilmaz2007,Kelly1997,Kelly1998,ModianoPower} as well as in \cite{juanScheduling}.


We assume that time is divided into frames with index $t$. All nodes are synchronized to the frame timing, and in each frame can change their configuration in terms of whether they transmit or receive, and, in the first case, how to split their total transmit power among the signals intended towards each neighbor. We denote the transmit/receive state of node $n$ with the boolean indicator $s_n(t)$, which is $s_n(t)=1$ if node $n$ transmits during frame $t$ and $0$ otherwise.
Denoting the total power budget at node $n$ by $P_n$, for each pair of nodes that form a link $\ell=(n,m)$, $n,m \in \mathcal{N}$ we define the normalized link power allocation $p_{n, m}(t)\in[0,1]$ subject to the constraint $\sum_{m\in\mathcal{A}(n)} p_{n, m}(t)\leq1$ to represent the power allocations of $n$ towards its neighbors. Note that due to the half duplex constraint we also must impose $ p_{n, m}(t)\leq s_n(t)(1-s_m(t))$.

We represent the state of all nodes in frame $t$ by the binary vector $\s(t)$ and denote the power allocations for all links by the vector $\pp(t)$ with $p_{n, m}(t)$ as the $(n-1)N+m$-th entry. The vector $\pp(t)$ is enough to fully identify the actions of all nodes in the network during frame $t$, and therefore the role of a scheduling policy should be to choose $\pp(t)$ for each frame $t$. However, the election of $\pp(t)$ is subject to the half duplex constraint, so scheduling constraints become more clear by writing $\s(t)$ explicitly. We denote the set of all possible power allocations assuming that $\s(t)$ is fixed as $\mathcal{P}(\s(t))$. This is a continuous convex set characterized by the sum-power constraint. The set of all state vectors $\s(t)$ is countable and contains all $2^N$ vectors of $N$ binary elements. Finally, the set of all possible schedules in the network is $\pp(t)\in\mathcal{P}={\displaystyle \bigcup_{\substack{\forall \s(t)} }} \mathcal{P}(\s(t))$. 

Using this notation, the rate of the transmission from $n$ to $m$ during frame $t$ is
\begin{equation} \label{eq:rate}
r_{n , m}(t) = \alpha_1  T_\mathrm{f}W  \log \left( 1 + \alpha_2  \frac{p_{n,m}(t)P_n|G_{n,m}|^2}{I_{n,m}(\pp(t))+W  N_o}  \right) 
\end{equation}
bits per frame, where $W$ is the bandwidth, $T_\mathrm{f}$ is the frame duration in seconds, $G_{n,m}$ is the equivalent complex gain between $n$ and $m$ of the MU-MIMO channel, $N_o$ is the thermal noise power spectral density, $I_{n,m}(\pp(t))$ is the interference power received in the link $(n,m)$ from other active transmitters, and $\alpha_1$ and $\alpha_2$ are the spectral and power efficiency penalties of the physical layer compared to the Shannon capacity. For illustration in our simulations we set $\alpha_2$ to $-3$ dB and $\alpha_1=1$.

\begin{figure*}[!b]
\normalsize
\vspace*{4pt}
\hrulefill
\setcounter{MYtempeqncnt}{\value{equation}}
\setcounter{equation}{1}
\begin{equation}
\label{eq:onequeue}
 q_n^f(t+1)=\begin{cases}
             0 & n \in \mathcal{D}_f\\
 \underset{\textrm{previous}}{\underbrace{q_n^f(t)}}+\underset{\textrm{incoming}}{\underbrace{{\displaystyle\sum_{m\in\mathcal{A}(n)}} r_{m,n}^f(t)}}-\underset{\textrm{outgoing}}{\underbrace{{\displaystyle\sum_{m\in\mathcal{A}(n)}}r_{n,m}^f(t)}}+\underset{\textrm{exogenous}}{\underbrace{a_s^f(t)}}& n\notin \mathcal{D}_f\\
            \end{cases}
\end{equation}
\setcounter{equation}{\value{MYtempeqncnt}}
\end{figure*}

The complex channel gain is defined as $G_{n,m}\triangleq g_{n, m}(\w_{n, m}^{\mathrm{r}})^H\Hb_{n, m}\w_{n, m}^{\mathrm{t}}$, where $g_{n, m}$ is the pathloss, $\Hb_{n,m}$ is the channel matrix, and $\w_{n, m}^{\mathrm{r}}$ and $\w_{n, m}^{\mathrm{t}}$ are the receiver and transmitter beamforming vectors. The exact model for $G_{n,m}$, as well as $I_{n,m}(\pp(t))$, is given in Appendix \ref{app:phy}.

The expression \eqref{eq:rate} can model rate in any arbitrary multi-hop MU-MIMO network, but in its evaluation for a mmWave multi-hop picocellular system we adopt the assumption that interference is much weaker than noise,  $I_{n,m}(\pp(t))\ll W  N_o\;\forall\pp(t)\in\mathcal{P}$. Thus the denominator in \eqref{eq:rate} simplifies to $\simeq W  N_o$ and $r_{n , m}(t)$ only depends on $p_{n,m}(t)$. We give in Appendix \ref{app:phy} a full review of the mmWave physical layer framework that permits to disregard the interference  because of the high pathloss and antenna directivity. The distinction between interference-limited and noise-limited regimes in single-hop mmWave cellular systems is discussed in \cite{7499308}, and the achievable rates were shown to display negligible differences between a so-called Actual Interference and a so-called Interference-Free mmWave network models in \cite{juanScheduling}.

Even with non-negligible interference, after we choose the schedule vectors $(\s(t),\pp(t))$, this fixes the values of the link rates in \eqref{eq:rate}. These rates characterize the evolution of traffic and queues in the network from frame $t$ to frame $t+1$. As said above, we assume there are $F$ flows that can be arbitrarily specified. We assume that at each node $n$ a separate queue can be maintained for each flow $f$, and we denote the number of bits in each queue by $q_{n}^f$. We denote with the vectors $\q_n$, $\q^f$ and $\q\triangleq \textnormal{stack}(\q^1 \dots \q^F)$ the queue lengths of all flows at node $n$, the queue lengths dedicated to flow $f$ at all nodes, and all the queues of the network, respectively.

We assume that for each flow $f\in\mathcal{F}$ some nodes, denoted by sets $\mathcal{S}_f$ and $\mathcal{D}_f$, are \textit{sources} and \textit{destinations} of information, respectively. Each flow can have more than one source or destination. Each source of flow $f$ during frame $t$ generates a random number of bits denoted by $a_s^f(t)$. We call these ``exogenous arrivals'' because their generation at the source models their arrival from ``outside the network,'' presumably from an application. When $|\mathcal{S}_f|>1$, ``multi-source'' traffic of the same flow is supported.  Each destination of a flow withdraws from the network all bits that reach it, always forcing its queue to $0$. Thus, when $|\mathcal{D}_f|>1$, ``anycast'' traffic is supported, not ``multicast''\footnote{Anycast traffic must reach only one of the destinations in a set while multicast must reach all destinations in the set.}. We assume that the average exogeneous arrivals rates at each source are elastic, i.e.,
\begin{definition}
The \textbf{elastic} traffic arrival process associated with flow $f$ in source node $s\in \mathcal{S}_f$ is a stochastic process with a time-varying mean arrival rate injected into the network $\lambda_s^f(t)=\Ex{}{a_s^f(t)}$. The long-term mean arrival rate of the source is defined as $x_s^f = {\displaystyle \lim_{T \to \infty} }\frac{1}{T} \sum_{t = 1}^{T} \lambda_s^f (t)$.
\end{definition}
We denote vectors $\ab(t),\lambdab(t),\x$ as the stacked packet arrival realizations, time-varying mean, and long-term average of the packet arrival processes, respectively. 

In addition to transmitting at the same time to multiple neighbors with the rates \eqref{eq:rate}, each transmitter has the ability to choose which queue (i.e., which flow) is serviced using the rates of each link. We denote by $r_{n,m}^f(t)$ the rate of link $n,m$ that node $n$ dedicates to serving the queue $q_n^f(t)$. The total link rates must not be exceeded $\sum_{f\in\mathcal{F}} r_{n,m}^f(t)\leq r_{n,m}(t)$ and the node cannot send more bits than there are in its queue  $\sum_{m\in\mathcal{A}(n)} r_{n,m}^f(t)\leq q_n^f\;\forall f$.

Considering together the exogeneous arrivals, outgoing links, incoming links, and discarding at destination, from each frame to the next each queue in the network evolves according to \eqref{eq:onequeue} at the bottom of the previous page,
\stepcounter{equation}
where a major difference between our MU-MIMO multi-hop network model and prior NUM literature is that two or more terms in the sum ${\displaystyle\sum_{m\in\mathcal{A}(n)}} r_{m,n}^f(t)$ can be non-zero at the same time, and likewise for ${\displaystyle\sum_{m\in\mathcal{A}(n)}} r_{n,m}^f(t)$. However, at least one of the two sums must always be zero due to the half-duplex constraint.
We adopt a compact matrix notation defining $\D$ as the diagonal matrix with $D_{n+N(f-1),n+N(f-1)}=1$ if $n\notin\mathcal{D}^f$ and zero elsewhere. Thus, \eqref{eq:onequeue} for all queues in $\q$ is
\begin{equation}
 \label{eq:qupdate}
 \q(t+1)=\q(t)+\D\left[(\R^T(t)-\R(t))\one_{NF,1}+\ab(t)\right]
\end{equation}
where the link rate matrix $\R(t)$ is defined with coefficients $R_{n+N(f-1),m+N(f-1)}=r_{n,m}^{f}(t)$.

An example illustration of our queue model is given in Fig. \ref{fig:zoom}. In this example we show the details for one link in the network graph, i.e.,  the one between nodes $n$ and $m$. Since $s_n(t)=1$ and $s_m(t)=0$, in frame $t$ node $n$ acts as a transmitter and $m$ and a receiver. $n$ allocates power $p_{n,m}$ to this transmission, achieving rate $r_{n,m}$, and the reverse link $r_{m,n}$ is not active. There are three flows in the example. Each node maintains one separate queue for each flow. Flows 1 and 3 have source and destination somewhere else in the rest of the network, represented as black dots, whereas node $n$ is the source of flow 2 and generates packets in its queue with rate $\lambda_n^{(2)}(t)$.

\begin{figure}
 \centering
 \includegraphics[width=.7\columnwidth]{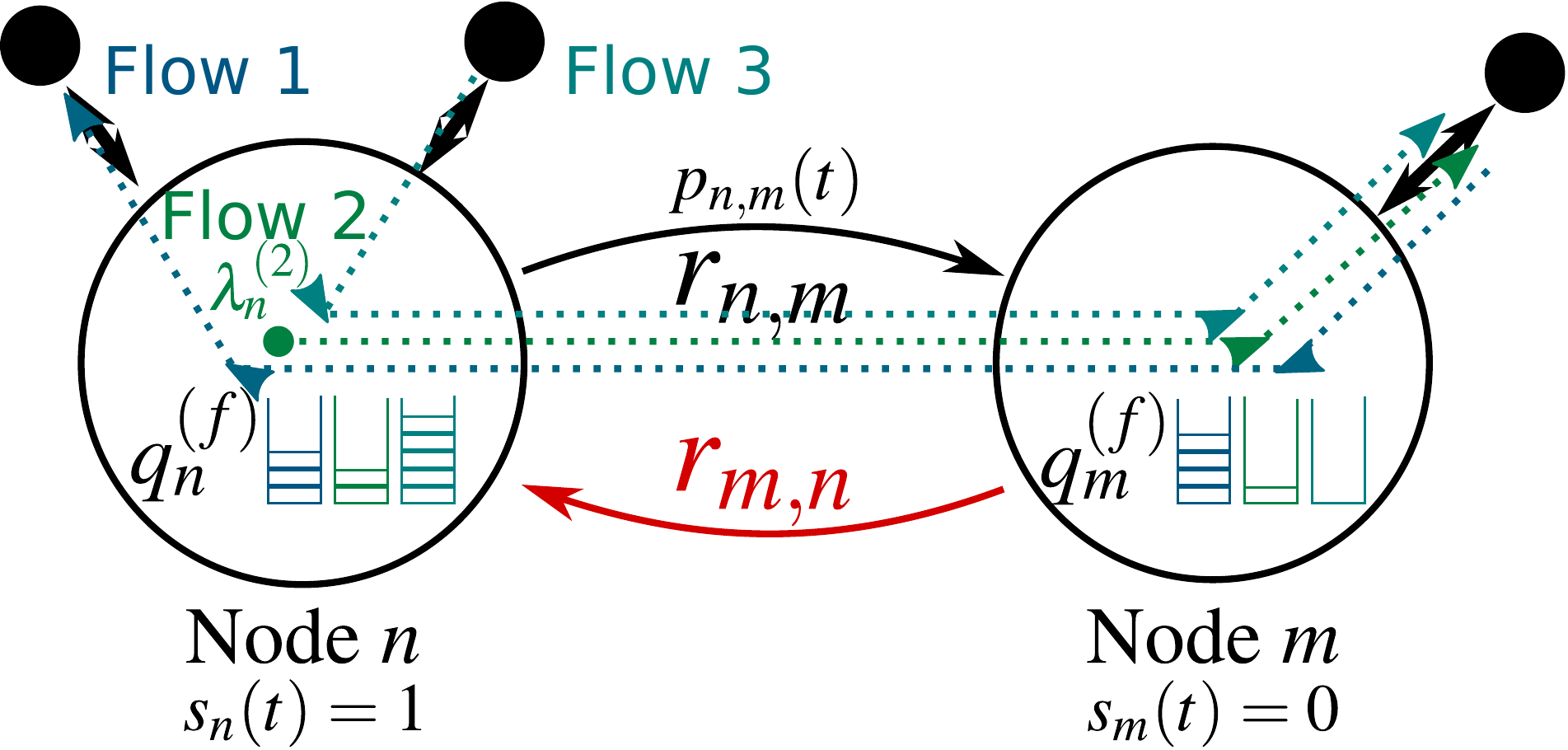}
 \caption{Zoom on one link in the network graph and the mathematical entities involved in forwarding}
 \label{fig:zoom}
\end{figure}

\section{General MU-MIMO NUM}
\label{sec:optimization}
The Network Utility Maximization optimizes the network performance measured by a \textit{utility function}. For each flow we define a continuous non-decreasing function that assigns a value $\mathscr{U}^f(\sum_{\mathcal{D}_f}y_n^f)$ to the successful delivery of a rate of $\sum_{\mathcal{D}_f}y_n^f$ bits per second to its destinations. A queue is stable if it does not grow unbounded, i.e., $\lim_{T\to\infty} \frac{1}{T}\sum_{t=1}^T q_n^f(t)<\infty$ with probability $1$. The network is stable when all its queues are stable, i.e., $\lim_{T\to\infty} \frac{1}{T}\sum_{t=1}^T\|\q(t)\|_1<\infty$ w.p.$1$.
\begin{definition}
 The \textbf{throughput capacity region} $\x\in\mathcal{X}$ is the set of long-term average rate vectors at the sources for which there exists a scheduling policy such that the network is stable.
\end{definition}
For any $\x\notin\mathcal{X}$ a non-zero number of bits of the arrival rate $\x$ stall in the queues for an infinite time and never reach the destinations. Conversely, if $\x\in\mathcal{X}$,  long term average rates at the sources and the long term average throughput that arrives at the destination are identical, $\sum_{\mathcal{D}_f}y_n^f=\sum_{\mathcal{S}_f}x_n^f$, even if the instantaneous values during each frame may be different. Relying on this conservation of traffic, we define the \textbf{Network Utility Maximization problem} as follows
\begin{equation}
\label{eq:NUMproblem}
  \max_{ \x\in \mathcal{X}} \sum_{f=1}^{F} \mathscr{U}^f \left(\sum_{n=1}^{N} x_n^f\right).
\end{equation}

Two examples of classic utility metrics are the linear and logarithmic functions. Linear utility results in the maximization of the sum rate. The function $\mathscr{U}(y)=\frac{1}{2}\log(y)$ results in the well-known ``proportional fair'' rate allocation \cite{Kushner2004}. Other utility functions can be designed, for example assigning different functions to different flows for service differentiation.

The NUM problem \eqref{eq:NUMproblem} presents two issues: first, it relies on conservation of traffic so that when the arrival rates at the sources are $\x$ the throughput at the destinations is the same. In other words, to achieve the network utility we need to find the scheduler that ensures network stability when the long term exogenous arrival rates at the sources are $\x^*$ where $\x^*$ is the solution to \eqref{eq:NUMproblem}. Secondly, the region $\mathcal{X}$ is generally unknown and choosing $\x^*$ with fixed rate inelastic traffic is difficult. In order to achieve both stability and a solution to the NUM problem without a priori knowledge of $\mathcal{X}$, we look for a scheduling technique that \textit{always} guarantees network stability $\forall \x\in\mathcal{X}$ and, separately, we design a rate-adaptation technique for elastic traffic that modifies $\lambdab(t)$ such that, \textit{at run-time}, the long term average rates in the network converge to $\x^*$.

\begin{definition}
A scheduling policy is \textbf{throughput optimal} if it makes the network stable $\forall\x\in\mathcal{X}$. 
\end{definition}

In the throughput optimality and NUM literature it is well known that Maximum Back Pressure is throughput optimal. We present a MU-MIMO MBP generalization in Algorithm \ref{alg:mmbp}. In our MU-MIMO extension we introduced the factor $\xi_{n,m}$ in line 3, which is applied in line 4, to address the constraint $\sum_{m\in\mathcal{A}(n)} r_{n,m}^f(t)\leq q_n^f\;\forall f$ when the transmissions to several receivers are dedicated to the same flow queue. First, line 2 selects for each link $(n,m)$ the flow with the highest ``queue pressure.'' Second, line 3 divides the bits in each queue $q_{n}^f$ across all the links that selected the same flow. Third, the actual rate for each link is established in line 4 as the minimum between the rate of the link and the number of available bits in the queue. Fourth and finally, the state vector $\s(t)$ and power allocation vector $\pp(t)$ are selected to maximize the sum queue-pressure metric $r_{n,m}^f(q_n^f-q_m^f)$ over all the links in the network.

\begin{theorem} 
\label{the:MBP}
The \textit{MU-MIMO MBP} scheduling algorithm in Alg \ref{alg:mmbp} is throughput optimal.

\end{theorem}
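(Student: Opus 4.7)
The plan is to follow the classical Lyapunov drift technique of Tassiulas--Ephremides, suitably generalized to the MU-MIMO DBSG setting. Observe that under any stationary scheduling policy the queue vector $\q(t)$ is a discrete-time Markov chain on a countable state space, so it suffices to exhibit a Lyapunov function with negative expected drift outside a bounded set of states. I would take the quadratic Lyapunov function $V(\q) = \|\q\|_2^2 = \sum_{n,f}(q_n^f)^2$ and expand the one-step conditional drift $\Delta V(\q(t)) \triangleq \Ex{}{V(\q(t+1))-V(\q(t))\mid \q(t)}$ using the queue-update law~\eqref{eq:qupdate}. Using $(a+b)^2 \le a^2 + 2ab + b^2$, together with the fact that per-frame arrivals $a_s^f(t)$ and link rates $r_{n,m}^f(t)$ are uniformly bounded (by moment assumptions on the arrivals and by \eqref{eq:rate} with bounded transmit power), one obtains a bound of the form
\begin{equation}
  \Delta V(\q(t)) \le B - 2\,\q(t)^T\D\bigl(\R(t)-\R^T(t)\bigr)\one_{NF,1} + 2\,\q(t)^T\D\,\lambdab(t),
\end{equation}
where $B$ is a constant depending only on $A_{\max}$, the maximum per-frame link rate and the second moment of $\ab(t)$.

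The key algebraic observation is that $\q^T\D(\R-\R^T)\one_{NF,1} = \sum_{(n,m),f} r_{n,m}^f(t)\,(q_n^f-q_m^f)\,\mathbf{1}\{n\notin\mathcal{D}_f\}$, which is precisely the aggregate back-pressure weight that lines 2--5 of Algorithm~\ref{alg:mmbp} maximize over admissible $(\s(t),\pp(t))\in\mathcal{P}$ and admissible flow splits $r_{n,m}^f(t)$. The factor $\xi_{n,m}$ and the truncation in line~4 ensure that the per-queue constraint $\sum_m r_{n,m}^f(t)\le q_n^f$ is respected without ever decreasing the back-pressure objective, because the maximizing inner flow $f^*(n,m)$ is the one attaining the largest pressure and the truncation only scales the contributions of ties within the same queue.

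Next I would characterize $\mathcal{X}$ through stationary randomized DBSG schedulers: an arrival vector $\x$ lies in $\mathcal{X}$ if and only if there exist flow variables $\{\mu_{n,m}^f\}$ satisfying per-flow conservation at all non-destination nodes with $\sum_{m}\mu_{n,m}^f - \sum_m \mu_{m,n}^f \ge x_n^f\,\mathbf{1}\{n\in\mathcal{S}_f\}$ and such that the link-rate vector $\{\sum_f \mu_{n,m}^f\}$ lies in the convex hull of the instantaneous rate vectors $\{r_{n,m}(\pp)\}_{\pp\in\mathcal{P}}$. The MU-MIMO part enters here only through the fact that $\mathcal{P}$ now ranges over DBSGs of $\mathcal{G}$ rather than matchings; convexity of the achievable region and the Carath\'eodory-type decomposition into finitely many DBSG schedules still go through because $\mathcal{P}$ is a compact set indexed by the finite collection of state vectors $\s$ followed by a convex continuous power set $\mathcal{P}(\s)$. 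For any $\x$ in the interior of $\mathcal{X}$, there exists $\epsilon>0$ and a stationary randomized DBSG scheduler achieving expected link/flow allocations $\mu_{n,m}^{f,\star}$ for which
\begin{equation}
  \Ex{}{\q(t)^T\D(\R^\star-\R^{\star T})\one_{NF,1}\mid\q(t)} - \q(t)^T\D\lambdab(t) \ge \epsilon\,\|\q(t)\|_1.
\end{equation}
Because Algorithm~\ref{alg:mmbp} chooses $(\s(t),\pp(t),\R(t))$ to maximize the back-pressure weight in every frame, the MBP weight is at least as large as that of the randomized scheduler, hence $\Delta V(\q(t)) \le B - 2\epsilon\,\|\q(t)\|_1$. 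This is a negative drift outside the bounded set $\{\q:\|\q\|_1\le B/\epsilon\}$, and the Foster--Lyapunov criterion (equivalently, the Neely stability lemma) yields $\lim_{T\to\infty}\frac{1}{T}\sum_{t=1}^T\Ex{}{\|\q(t)\|_1}<\infty$, which gives stability with probability one, establishing throughput optimality.

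The main obstacle I anticipate is the characterization of $\mathcal{X}$ in the MU-MIMO setting and the construction of the dominating randomized DBSG scheduler: one must verify that allowing simultaneous transmissions to and receptions from multiple neighbors does not break the convex-hull decomposition that underlies the classical matching-based proof, and one must handle the per-queue constraint inside a single DBSG where several outgoing links may be assigned to the same flow (which is exactly what the $\xi_{n,m}$ factor in Algorithm~\ref{alg:mmbp} is designed for). Once these points are handled, the Lyapunov argument is essentially identical in structure to the one-to-one case.
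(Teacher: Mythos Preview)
Your proposal is correct and follows essentially the same approach as the paper's proof in Appendix~\ref{app:PAC}: quadratic Lyapunov function $\|\q\|^2$, drift expansion via \eqref{eq:qupdate}, a bounded constant term depending on $A_{\max}$ and $R_{\max}$, comparison of the MBP back-pressure weight against that of a stationary randomized scheduler achieving any $\x\in\text{Int}\{\mathcal{X}\}$ with slack $\epsilon$, and then Foster--Lyapunov to conclude positive recurrence. Your discussion of the role of $\xi_{n,m}$ and of the convex-hull characterization of $\mathcal{X}$ over DBSGs (rather than matchings) matches the paper's handling of the MU-MIMO generalization; the only cosmetic difference is that the paper takes the joint pair $(\q(t),\R(t))$ as the Markov state, which is equivalent since $\R(t)$ is a deterministic function of $\q(t)$ under MBP.
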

\begin{proof}

The proof is given in Appendix \ref{app:PAC}, which is a variation of the proof with random link rates in \cite{Eryilmaz2007}. We define the random ``network state'' variable $\uu(t)=(\q(t),\R(t))$ which includes both the queue states and the rate matrix chosen by MBP. By definition the Markov chain $\uu(t)\to\uu(t+1)$ is \textit{irreducible} (any state is reachable form any other). We show that if $\x\in\mathcal{X}$ then the Markov Chain is \textit{positive recurrent}, i.e. it takes on average a finite number of transitions to return to a set of states that are associated with bounded queue lengths ($\|\q(t)\|^2<C_1$).
\end{proof}

\begin{remark}
 In Algorithm \ref{alg:mmbp} all the coefficients in $\pp(t)$ may affect the link rate $r_{n,m}(\pp(t))$ and we do not need to assume that interference is negligible in \eqref{eq:rate}. Theorem \ref{the:MBP} applies and MU-MIMO MBP as in  Alg. \ref{alg:mmbp} is a throughput optimal scheduler for any MU-MIMO multihop network model.
\end{remark}

\begin{algorithm}
\small
\caption{MU-MIMO Maximum Back Pressure}
\label{alg:mmbp}
\begin{algorithmic}[1]
\FORALL {$t$}
  \STATE {$f_{n,m}^*=\arg\max_{f}(q_{n}^f(t)-q_{m}^f(t))$}
  \STATE {$\xi_{n,m}=\frac{r_{n,m}(\pp(t))}{\displaystyle\sum_{m':f_{n,m}^*=f_{n,m'}^*} r_{n,m'}(\pp(t))}$}  
  \STATE {$r_{n,m}^f(\pp(t))=\begin{cases}
		  \min(r_{n,m}(\pp(t)),q_{n}^f(t)\xi_{n,m})& f=f^*_{n,m}\\
		  0& \text{otherwise}
		\end{cases}$}
  \STATE {$(\s(t),\pp(t))=$
  \begin{equation}
   \arg \max_{\substack{ \s(t)\in\{0,1\}^N \\ \pp(t)\in\mathcal{P}(\s(t)) \\ \textnormal{s.t. lines 2-4}}}\sum_{n=1}^N\sum_{m=1}^N\max_f r_{n,m}^f(\pp(t))(q_{n}^f-q_{m}^f)\label{eq:MBP}
  \end{equation}}
\ENDFOR
\end{algorithmic}
\end{algorithm}

Next, we design our rate-adaptation technique inspired by a multi-objective optimization that associates  NUM problem \eqref{eq:NUMproblem} and the queue length via an auxiliary scalar constant $V$.
\begin{equation}
\label{eq:EPSproblem}
 \x^{V}=\arg\max_{\x} V \sum_{n,f}\mathscr{U}(x_{n}^{f})-\Ex{\q}{\q^T\x}.
\end{equation}

The Adaptive NUM Congestion Control (Algorithm \ref{alg:anumcc}) is based on the gradient of \eqref{eq:EPSproblem} equal to zero. When $\mathscr{U}^f()$ is strictly concave, since by definition $\mathscr{U}^f()$ is non-decreasing, the derivative is monotonically decreasing with rate. Thus the inverse derivative in ANCC increases the rate when the queue is small, and vice-versa. The scalar constant $V$ scales the response to queue length and the proximity to the optimal solution. Higher values of $V$ make $\|\x^V-\x^*\|^2$ smaller, where $\x^*$ is the true NUM solution. However, increasing $V$ means that ANCC reduces the rate less when queues grow, bringing the network closer to instability and making its network-state Markov process need many frames to converge to the steady-state distribution. 

\begin{algorithm}
\small
\caption{Adaptive NUM CC}
\label{alg:anumcc}
\begin{algorithmic}[1]
\FORALL {$t$}
\STATE{$R_{\max}=\max_{n,m}(r_{n,m|p_{n,m}=1})$}
\STATE{$A_{\max}=\max_{n}(\mathcal{A}(n))$}
\STATE{$\lambda_{\max}=A_{\max}R_{\max}$}
\STATE{$\displaystyle \lambda_n^f(t)=\begin{cases}
                        \max(\min(\dot{\mathscr{U}}^{-1}(\frac{q_n^f(t)}{V},\lambda_{\max}),0) & n\in \mathcal{S}_f\\
                        0& \text{otherwise}
                       \end{cases} \inlineeqnum\label{eq:cc}$}
\ENDFOR
\end{algorithmic}
\end{algorithm}

\begin{theorem} 
\label{the:NUMCC}
In a network with MBP scheduling and $\lambdab(t)$ controlled by the Adaptive NUM CC algorithm (Alg \ref{alg:anumcc}) with a strictly concave utility function $\mathscr{U}()$, long-term rates converge to $\x^{V}$, the solution of \eqref{eq:EPSproblem}. Moreover $\x^V$ is arbitrarily close to $\x^*$ as $V\to\infty$.
\end{theorem}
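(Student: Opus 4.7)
The plan is to use the Lyapunov drift-plus-penalty framework, in the spirit of Neely-style congestion control proofs. I would define the quadratic Lyapunov function $L(\q)=\tfrac{1}{2}\|\q\|^2$ and compute the one-step conditional drift $\Delta(\q(t))=\Ex{}{L(\q(t+1))-L(\q(t))\mid\q(t)}$. Squaring the queue update \eqref{eq:qupdate} and taking conditional expectation, standard manipulations give a bound of the form
\begin{equation*}
\Delta(\q(t)) \le B - \Ex{}{\q(t)^T\D(\R(t)-\R^T(t))\one_{NF,1}\mid\q(t)} + \Ex{}{\q(t)^T\D\,\ab(t)\mid\q(t)},
\end{equation*}
where $B$ is a finite constant absorbing the second moments of arrivals and served bits (both bounded because $\lambda_{\max}$ and $R_{\max}$ are bounded). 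Subtracting $V\Ex{}{\sum_{n,f}\mathscr{U}^f(\lambda_n^f(t))\mid\q(t)}$ from both sides yields the drift-plus-penalty inequality, whose right-hand side decouples cleanly into two independent minimization problems over the same frame: one over the scheduling variables $(\s(t),\pp(t),\R(t))$, and one over the arrival rates $\lambdab(t)$.

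Next, I would argue that Algorithm~\ref{alg:mmbp} exactly minimizes the scheduling term $-\q^T\D(\R-\R^T)\one$ pointwise each frame, since rewriting that expression as $\sum_{n,m}\sum_f r_{n,m}^f(q_n^f-q_m^f)$ and moving the flow selection inside (lines 2--4 of Alg.~\ref{alg:mmbp}) reproduces the MBP objective in \eqref{eq:MBP}. Symmetrically, Algorithm~\ref{alg:anumcc} minimizes the congestion-control term $\Ex{}{\q^T\D\ab}-V\Ex{}{\sum\mathscr{U}^f(\lambda_n^f)}$, because setting the gradient of $q_n^f\lambda_n^f/V - \mathscr{U}^f(\lambda_n^f)$ to zero gives $\dot{\mathscr{U}}^f(\lambda_n^f)=q_n^f/V$, i.e. $\lambda_n^f=\dot{\mathscr{U}}^{-1}(q_n^f/V)$, clipped to $[0,\lambda_{\max}]$ as in \eqref{eq:cc}. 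Hence the joint policy minimizes the drift-plus-penalty bound at every frame.

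To translate per-frame optimality into long-term optimality, I would invoke the existence of a \emph{stationary randomized} scheduling/admission policy (guaranteed because $\x^V\in\mathcal{X}$ by Theorem~\ref{the:MBP} applied to arrivals $\x^V$) that achieves exogenous rates $\lambdab^\star=\x^V$ while keeping queues mean-rate stable. Plugging the randomized policy into the drift-plus-penalty inequality produces a constant upper bound $B - V\sum_{n,f}\mathscr{U}^f(x_n^{f,V})$. Since MBP+ANCC minimize the same bound, we obtain $\Delta(\q(t)) - V\Ex{}{\sum \mathscr{U}^f(\lambda_n^f(t))\mid\q(t)} \le B - V\sum_{n,f}\mathscr{U}^f(x_n^{f,V})$. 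Telescoping over $t=1,\dots,T$, dividing by $T$, using Jensen's inequality together with the continuity and strict concavity of $\mathscr{U}^f$, and letting $T\to\infty$ gives $\sum_{n,f}\mathscr{U}^f(\bar{x}_n^f) \ge \sum_{n,f}\mathscr{U}^f(x_n^{f,V}) - B/V$ for the long-term averages $\bar{x}_n^f$. Strict concavity of $\mathscr{U}^f$ then forces $\bar{\x}=\x^V$ (since $\x^V$ is the unique maximizer of \eqref{eq:EPSproblem} given the bounded-queue implicit constraint), while also guaranteeing queue stability through the usual $T$-slot drift argument.

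The final claim, that $\x^V\to\x^*$ as $V\to\infty$, follows by comparing \eqref{eq:EPSproblem} and \eqref{eq:NUMproblem}: the penalty $\Ex{}{\q^T\x}/V$ vanishes in the limit, and the optimal-value gap $\sum_{n,f}\mathscr{U}^f(x_n^{f,*})-\sum_{n,f}\mathscr{U}^f(x_n^{f,V})=O(1/V)$, combined with strict concavity (which gives a quadratic lower bound on the utility gap in terms of $\|\x^V-\x^*\|^2$ on a bounded region), implies $\|\x^V-\x^*\|\to 0$. The main obstacle I anticipate is the third step: constructing the stationary randomized comparison policy carefully within the MU-MIMO constraint set $\mathcal{P}$, which is continuous in $\pp$ but combinatorial in $\s$, so the existence argument requires Caratheodory-type mixing over the (countable) state vectors and convex mixing over power allocations, rather than the simpler matching-based mixing used in one-to-one NUM proofs such as \cite{Eryilmaz2007}.
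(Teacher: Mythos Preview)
Your proposal is correct and follows essentially the same Lyapunov drift-plus-penalty argument as the paper: quadratic Lyapunov function on $\q$, bound the drift, add the $V$-weighted utility term, observe that MBP and ANCC separately minimize the scheduling and congestion-control pieces, and compare against the rate vector $\x^V$ supported by a stationary (randomized) policy in $\mathcal{X}$. The only cosmetic difference is that you close the argument via Neely-style telescoping over $T$ frames and Jensen, whereas the paper invokes the Foster--Lyapunov/Tweedie criterion to show positive recurrence of the Markov chain $(\q(t),\R(t))$ to a set where $\|\q\|^2$ is bounded and the utility gap is at most $C_1/V$; both finishes are standard and equivalent, and your anticipated obstacle (Carath\'eodory mixing over $\s$ and convex mixing over $\pp$) is exactly what the paper handles by writing $\R^V\in\text{Co}\{\R(\pp):\pp\in\mathcal{P}\}$.
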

\begin{proof}
The proof of convergence to $\x^V$, given in Appendix \ref{app:PAC}, is an extension of the proof of Theorem \ref{the:MBP}, and is a variation of the proof in \cite{Eryilmaz2007}. We define the random variable $\uu(t)=(\q(t),\R(t))$ and show that  $\uu(t)\to\uu(t+1)$ is a positive recurrent irreducible Markov process where the system returns in average finite time to states that satisfy 1) that the queues are bounded by a constant $\|\q(t)\|^2<C_1V$ and 2) that the long-term averages at the ANCC sources satisfy $\|\x-\x^V\|<C_2/V$. Thus the network is stable and we can make the long-term average rates as close to the solution of \eqref{eq:EPSproblem} as we desire by increasing $V$. Finally ${\displaystyle \lim_{V\to\infty}}\x^V=\x^*$ is a property of the multi-objective optimization \eqref{eq:EPSproblem}.
\end{proof}

\section{MU-MIMO MBP Scheduler for mmWave}
\label{sec:mmWaveImplementation}

Theorems \ref{the:MBP}, \ref{the:NUMCC} apply to any multi-hop MU-MIMO network. Nevertheless, the difficulty in addressing interference in MU-MIMO multip-hop networks is not the theoretical proof of MBP NUM optimality, but the implementation of line 5 of Algorithm \ref{alg:mmbp} as an optimization. 

Thanks to the dual notation $(\s(t),\pp(t))$, a separation in two subproblems can be written:
\begin{enumerate}[i)]
 \item Selection of the optimal power allocation as a function of a given node-states vector, 
 \begin{equation}
 \begin{split}
 \label{eq:pallocsubproblem}
  \pp(t)^*&=\rho(\s(t))\\
  &\triangleq \max_{\substack{ \pp(t)\in\mathcal{P}(\s(t)) \\ \textnormal{s.t. fixed }\s(t)}}\sum_{n=1}^N\sum_{m=1}^N\max_f r_{n,m}^f(\pp(t))(q_{n}^f-q_{m}^f) ,
  \end{split}  
 \end{equation}
which is a constrained problem over the domain of non-negative real numbers, and
 \item The selection of $\s(t)$, an unconstrained binary problem assuming \eqref{eq:pallocsubproblem} is known for each $\s(t)$
 \begin{equation}
 \label{eq:MBPDBSGsubproblem}\max_{\s(t)\in\{0,1\}^N} \sum_{n=1}^N\sum_{m=1}^N\max_f r_{n,m}^f(\rho(\s(t)))(q_{n}^f-q_{m}^f).
 \end{equation}
\end{enumerate}

For the first problem, in mmWave we can adopt the assumption that interference is negligible and write $I_{n,m}(\pp(t))+WN_o\simeq WN_o$ in \eqref{eq:rate}. With this, power allocation can be solved locally for each node $n$. For the second problem, only links $(n,m)$ where $s_n(t)=1$ and $s_n(t)=0$ communicate. Thus we say that the state vector $\s(t)$ induces a \textbf{Directed Bipartite SubGraph} (DBSG) of $\mathcal{G}(\mathcal{N},\mathcal{L})$, denoted by $\mathcal{SG}(\mathcal{N}_T\bigcup \mathcal{N}_R,\mathcal{L}_{TR})$. In the DBSG the set of edges $\mathcal{N}$ is divided into two disjoint sets $\mathcal{N}_T\triangleq\{n\in\mathcal{N}:s_n(t)=1\}$ and $\mathcal{N}_R\triangleq\{m\in\mathcal{N}:s_m(t)=0\}$, and the set of active links satisfies $\mathcal{L}_{TR}\triangleq\{(n,m)\in\mathcal{L}:s_n(t)=1,s_m(t)=0\}$. We say the second problem is the Maximum Back Pressure DBSG of the network subject to a power allocation $\rho(\s(t))$.

Unfortunately, even without interference, the implementation is still challenging because $\rho(\s(t))$ changes with $\s(t)$. To make the problem tractable we introduce a simplified fixed power allocation that leads to fixed link weights, making the problem a Maximum Weighted DBSG (MWDBSG).

We rely on the problem separation to highlight that Theorems \ref{the:MBP} and \ref{the:NUMCC}, and their proofs, can be applied to MU-MIMO networks with any additional power-allocation rule:
\begin{lemma}
\label{lem:constrained}
A MU-MIMO MBP + ANCC with power allocation constraints is still NUM optimal. 
\end{lemma}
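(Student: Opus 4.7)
The plan is to observe that the proofs of Theorems~\ref{the:MBP} and~\ref{the:NUMCC} in Appendix~\ref{app:PAC} depend on the admissible schedule set only through two places: line~5 of Algorithm~\ref{alg:mmbp}, which selects the maximizer over that set, and the characterization of the throughput capacity region as the set of arrival rates for which some stationary randomized policy drawing from that set stabilizes the queues. Adding power-allocation constraints simply replaces $\mathcal{P}$ by a subset $\mathcal{P}'\subseteq\mathcal{P}$, and the entire Markov-chain argument goes through verbatim once the capacity region is reinterpreted relative to $\mathcal{P}'$.

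First, I would let $\mathcal{X}'\subseteq\mathcal{X}$ denote the set of long-term source rate vectors that are stabilizable by some stationary policy taking $\pp(t)\in\mathcal{P}'$, and restate the NUM problem~\eqref{eq:NUMproblem} as a maximization of $\sum_f\mathscr{U}^f(\sum_n x_n^f)$ over $\x\in\mathcal{X}'$, with constrained optimum $\x^{*}_{\mathcal{P}'}$. Because the queue update~\eqref{eq:qupdate} is unchanged, $\mathcal{X}'$ remains the convex hull of average service vectors achievable by time-sharing elements of $\mathcal{P}'$, and its interior is non-empty whenever $\mathcal{P}'$ is closed and contains at least one schedule that can clear a positive amount of each active queue.

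Second, I would re-run the Lyapunov drift computation of Appendix~\ref{app:PAC} with the maximization in line~5 and all candidate competing policies restricted to $\mathcal{P}'$. The drift inequality only requires that the MBP scheduler achieve a weighted throughput at least as large as that of any other stationary policy drawn from the admissible set, so the contrast between the MBP weight and the service rate of the witness policy, whose existence follows from $\x\in\text{Int}\{\mathcal{X}'\}$, still yields the negative-drift bound needed for positive recurrence and for the bounds $\|\q(t)\|^2<C_1V$ and $\|\x-\x^V\|<C_2/V$. The limit $\x^V\to\x^{*}_{\mathcal{P}'}$ as $V\to\infty$ is a property of the multi-objective relaxation~\eqref{eq:EPSproblem} that does not depend on which admissible set defines the capacity region.

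The main obstacle is essentially bookkeeping: one must verify that every witness policy invoked in the drift argument is itself compatible with the additional power-allocation constraint, so that the MBP scheduler is permitted to dominate it in weighted-rate terms. This is automatic because the witness policy is drawn from $\mathcal{P}'$ by the very definition of $\mathcal{X}'$, and no step of the original proof invokes a policy outside the admissible set; the whole chain of inequalities from Appendix~\ref{app:PAC} therefore transfers with only the notational change $\mathcal{P}\mapsto\mathcal{P}'$, $\mathcal{X}\mapsto\mathcal{X}'$, $\x^{*}\mapsto\x^{*}_{\mathcal{P}'}$.
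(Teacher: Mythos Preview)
Your proposal is correct and follows essentially the same approach as the paper: replace the admissible schedule set $\mathcal{P}$ by the constrained subset, redefine the throughput capacity region accordingly as $\breve{\mathcal{X}}$ (your $\mathcal{X}'$), and observe that the Lyapunov drift argument of Appendix~\ref{app:PAC} carries over verbatim because the witness policy needed for the drift bound is, by the definition of the constrained capacity region, already drawn from the constrained set. The paper's proof is terser and simply asserts that the substitution leaves Theorems~\ref{the:MBP} and~\ref{the:NUMCC} valid, whereas you spell out explicitly why the MBP-versus-witness comparison still holds, but there is no substantive difference in the argument.
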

\begin{proof}
 Denote by $\breve{\mathcal{P}}(\s(t))\subset\mathcal{P}(\s(t))$ a set of valid power allocations given $\s(t)$ under any additional constraint we desire. Substituting $\mathcal{P}(\s(t))$ by $\breve{\mathcal{P}}(\s(t))$ in Algorithm \ref{alg:mmbp}, denote the new power allocation subproblem under constraints as 
 \begin{equation}
 \label{eq:pconstsubproblem}\breve{\rho}(\s(t))\triangleq \max_{\substack{\pp(t)\in\breve{\mathcal{P}}(\s(t))\\ \textnormal{s.t. fixed }\s(t)}}\sum_{n=1}^N\sum_{m=1}^N\max_f r_{n,m}^f(t)(q_{n}^f-q_{m}^f).
 \end{equation} Define the constrained MU-MIMO MBP scheduler as the one that chooses $\s(t)$ as the maximum back-pressure DBSG when the power allocation is $\breve{\rho}(\s(t))$. Defining the constrained throughput capacity region $\breve{\mathcal{X}}\subset\mathcal{X}$ as the set of all arrival rate vectors that can be stabilized under the additional constraints, the proofs of Theorems \ref{the:MBP} and \ref{the:NUMCC} in Appendix \ref{app:PAC} remain valid substituting $\x\in\mathcal{X}$ by $\x\in\breve{\mathcal{X}}$. Thus this constrained MU-MIMO MBP with ANCC is optimal with regard to a ``constrained version'' of the NUM problem substituting $\x\in\mathcal{X}$ with $\x\in\breve{\mathcal{X}}$ in \eqref{eq:NUMproblem}.
\end{proof}

This modified version of Theorems \ref{the:MBP} and \ref{the:NUMCC} allows us to invoke the NUM optimality result for any power allocation strategy $\breve{\rho}(\s(t))$ of our choice.
In particular, we can discuss three power allocation techniques that make practical sense in multi-hop mmWave picocellular networks:

 \subsubsection{Independent Amplifiers with Fixed Power (proposed in this paper)}
 In this paper we present a numerical solver for the optimal scheduler with the additional assumption that power allocation must be equal. mmWave radio hardware is often built using hybrid analog-digital circuit designs \cite{Orhan2015,Mo2016,Abbas2016}. Thus, we can assume that each MU-MIMO simultaneous transmission makes use of a separate power amplifier in the analog circuitry, and the power allocation takes fixed equal values as $p_{n, m}(t)=s_n(1-s_m)P_n/|\mathcal{A}(n)|\;\forall m\in\mathcal{A}(n)$. Moreover, we maintain the assumption that interference is negligible \cite{7499308,juanScheduling}. Combined, these two assumptions allow to represent each link rate as a fixed constant $R_{n,m}=r_{n,m}(1/|\mathcal{A}(n)|)$, where we evaluate \eqref{eq:rate} without interference and with power allocation $p_{n,m}(t)=1/|\mathcal{A}(n)|$. For each link, the queue-pressure weight is defined as $w_{n,m}=\min(R_{n,m},\xi_{n,m}q_{n}^f)\max_f(q_n^f-q_m^f)$, and the MBP scheduler \eqref{eq:MBP} under these constraints is the Maximum Weighted Directed Bipartite SubGraph (MWDBSG) of $\mathcal{G}(\mathcal{N},\mathcal{L})$,
 \begin{equation}
 \label{eq:mwdbsg}\max_{\mathcal{SG}(\mathcal{N}_T\bigcup \mathcal{N}_R,\mathcal{L}_{TR})\subset \mathcal{G}(\mathcal{N},\mathcal{L})}\sum_{(n,m)\in\mathcal{L}_{TR}} w_{n,m}. \end{equation}

This graph partition is more difficult than the MWM implementation of MBP multi-hop network scheduling without MU-MIMO used in prior works \cite{Tass,Eryilmaz2007,Kelly1997,Kelly1998,ModianoPower,juanScheduling}. In order to solve the MWDBSG problem we use a MILP toolbox. We define the binary indicator $b_{n,m}(t)\in\{0,1\}$ to indicate if the link $(n,m)$ is chosen, that is $s_n(t)=1$ and $s_m(t)=0$. We define a stack vector with these binary variables as $\bb$. The scheduling variables for the general case $\s(t)$ and $\pp(t)$ can be reconstructed using $p_{n,m}(t)=b_{n,m}(t)/\mathcal{A}(n)$ and $s_n(t)=\min(1,\sum_{m\in\mathcal{A}(n)} b_{n,m}(t))$. We can use the linear expression $b_{n,m}+\frac{1}{|\mathcal{A}(n)|}\sum_{m'\in\mathcal{A}(n)}b_{m',n}\leq 1$ as a replacement of the half duplex scheduling constraint. Since $b_{n,m}$ takes binary values, this linear constraint allows either $b_{n,m}=1$ or $\frac{1}{|\mathcal{A}(n)|}\sum_{m'\in\mathcal{A}(n)}b_{m',n}>0$, but not both. Therefore, the MWDBSG optimization, which is \eqref{eq:MBP} constrained to equal power allocation, can be rewritten as a MILP problem. Finally, standard MILP toolboxes can solve the following problem and implement the scheduler:

\begin{equation}
\label{eq:milproblem}
\begin{split}
 &\max_{\bb} \sum_{n} \sum_{m\in \mathcal{A}(n)} b_{n,m}w_{n,m}
  \\&
  \textnormal{ s.t. }b_{n,m}+\frac{1}{|\mathcal{A}(n)|}\sum_{m'\in\mathcal{A}(n)}b_{m',n}\leq 1 \;\forall n,m\\
 \end{split}
\end{equation}

\subsubsection{Destination Selection (prior work)}
In our earlier work \cite{gomezITAoptimal} we considered MU-MIMO at the receiver but not at the transmitter. This assumption, hereafter ``K-to-one,'' is realistic if each device's radio has multiple phased-array blocks but only one power amplifier, so MU-MIMO can only be used at the receiver side. Power allocation in this context reduces to destination-selection at the transmitter. In our prior work \cite{gomezITAoptimal} we designed a heuristic Message-Passing algorithm that performed close to the optimal. In this paper, instead, we obtain the optimal MBP K-to-one scheduler using a constrained variant of the MILP algorithm \eqref{eq:milproblem}. For this we add the constraint $\sum_{m}b_{n,m}\leq 1\;\forall n$ and change the power allocation to $p_{n,m}(t)=b_{n,m}(t)$.

 \subsubsection{Waterfilling (future extension)}
 A potential generalization of the model in this paper would be to allow dynamic power allocations that adapt to the state of the neighbors.
 Many state-of-the-art physical layer proposals for MU-MIMO single-hop cellular systems maximize the sum rate over all receivers. Maintaining the assumption that the interference is negligible  in mmWave, but enabling dynamic power allocation, each transmitter can independently maximize its individual sum rate towards its set of neighbors in a receiving state. This would be achieved with the water-filling algorithm, choosing $p_{n, m}(t)=s_n(t)(1-s_m(t))\max\left(0,\frac{1}{\lambda^*}-\frac{WN_o}{P_n|G_{n,m}|^2}\right)$ where $\lambda^*$ is the ``water level'' associated with the Lagrange multiplier for the constraint $\sum_{m\in\mathcal{A}(n)}p_{n,m}(t)=1$. Unfortunately, in this scenario $p_{n, m}(t)$ varies with $\{s_m(t),m\in\mathcal{A}(n)\}$, and an efficient algorithm to find the optimal $\s(t)$ remains unknown. This scenario is important because waterfilling power allocation is common in existing cellular systems, nonetheless we leave it for future work. 
 
\section{Numerical Results}
\label{sec:numeric}

We simulate randomly generated mmWave heterogeneous picocell networks with RNs for coverage extension as illustrated in Fig. \ref{fig:picocell}. We assume that BSs are located with an Inter Station Distance of $346$ m despite the fact that the range of a single BS is approximately $120$ m, and the coverage extended to $200$ m with RNs. A similar picocell network model has been considered for mmWave cellular capacity evaluations in \cite{Akdeniz2014,juanScheduling}. The channels and physical layers are modeled using the NYU mmWave model discussed in Appendix \ref{app:phy}. Since we assume that the interference is negligible, it is sufficient that we simulate the cell formed by a single BS with the RNs and UEs attached to it. That is, we only simulate the devices contained in the green hexagon in Fig. \ref{fig:picocell}. There is no need to simulate the ``encircling'' neighboring cells and the nodes in them to reproduce realistic interference since their power is negligible, unlike prior work for 4G LTE capacity evaluations where simulation of neighboring cells was imperative \cite{fgomez2014improvedrelaying}.

We assume 10 randomly located UEs uniformly distributed in the disk with radius $200$ m, a BS at the center of the disc, and four IAB RNs at $115$ m from the BS with a $90^o$ rotation. We repeated the simulation over 50 ``drops,'' i.e., realizations of the random UE location process. Two nodes are declared ``connected'' if they meet a maximum omnidirectional (i.e., without beamforming) pathloss threshold of $200$ dB, as in the neighbor-detection model in \cite{Barati2015}. All links that satisfy the minimum pathloss threshold and are not UE-UE form the network mesh topology. A few examples of simulated random network graphs are illustrated in Fig. \ref{fig:topo}. Each simulation ran for $10^5$ frames to let the network state Markov process approach its steady-state distribution. Due to the high frequency and bandwidth, we assumed a mmWave frame duration of $T_f=1$ $\mu$s. Thus $10^5$ frames span only $100$ ms of network operation and modeling the network as quasi-static is reasonable. The radio hardware parameters are given in Table~\ref{tab:param_tab}. 

\begin{figure}
  \centering
  \subfigure[Simulation no. 3]{
    \includegraphics[width=.6\columnwidth]{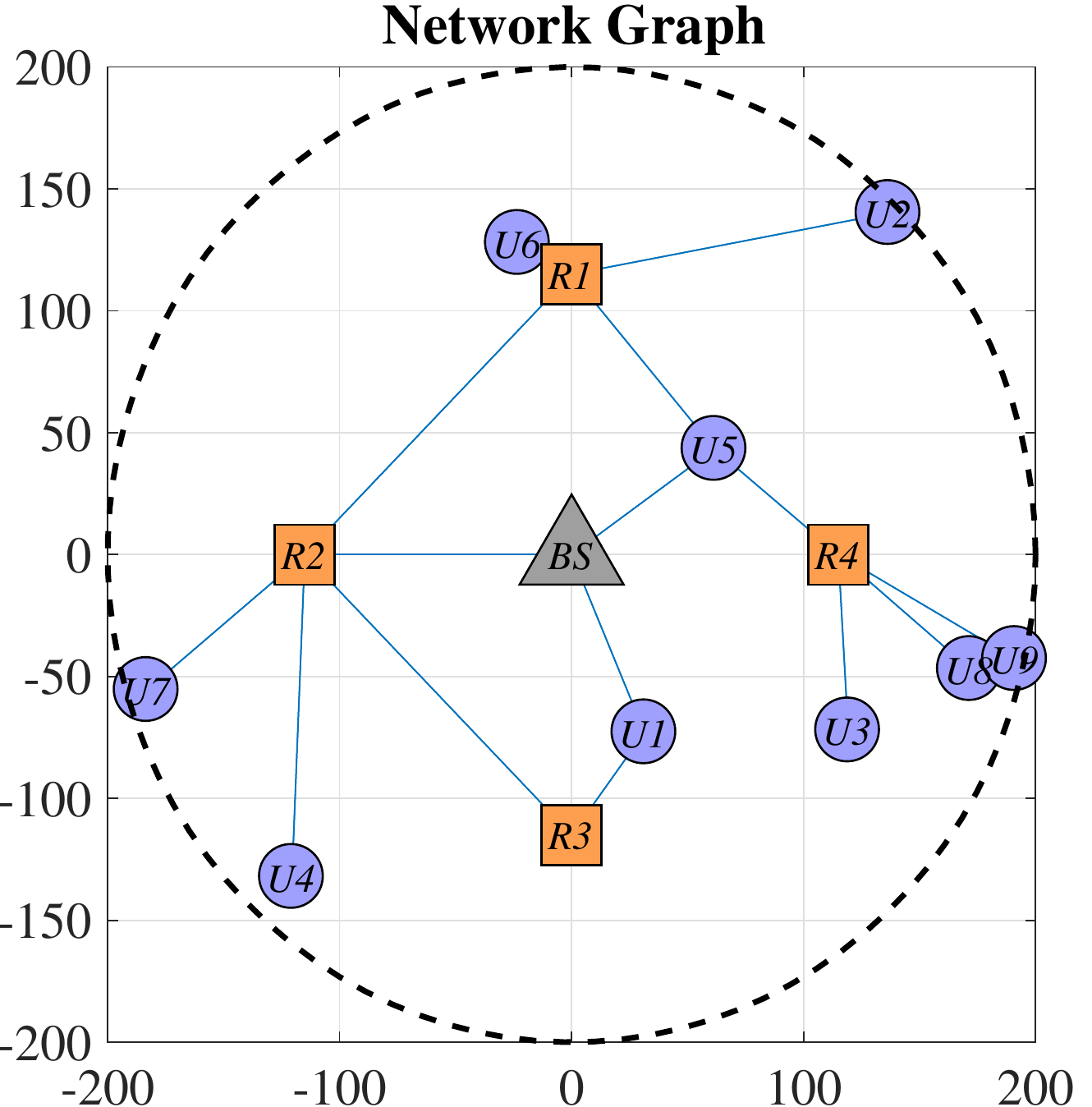}
    }
 \hspace{.2in}
  \subfigure[Simulation no. 7]{
    \includegraphics[width=.6\columnwidth]{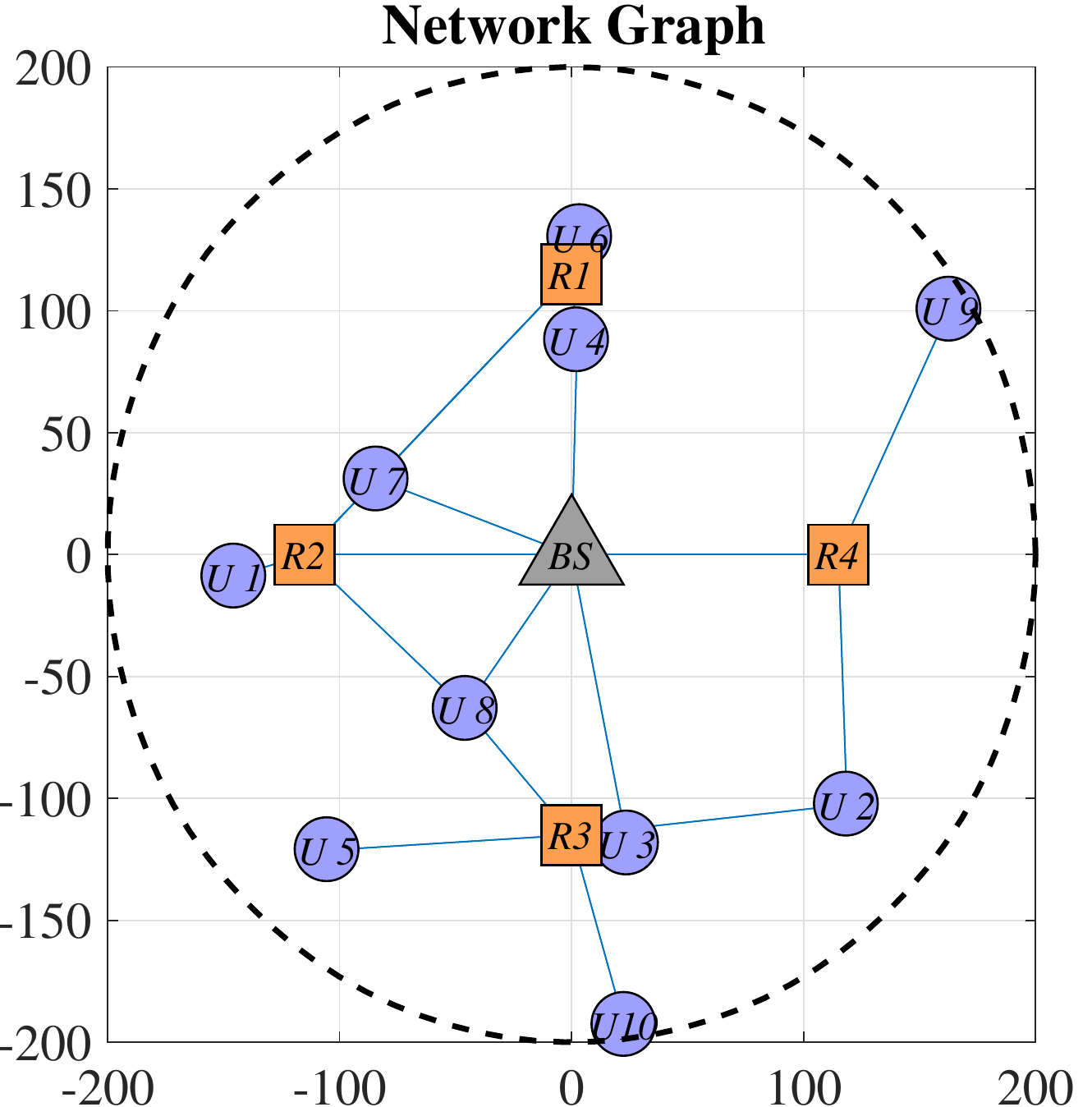}
    }
 \hspace{.2in}
  \subfigure[Simulation no. 20]{
    \includegraphics[width=.6\columnwidth]{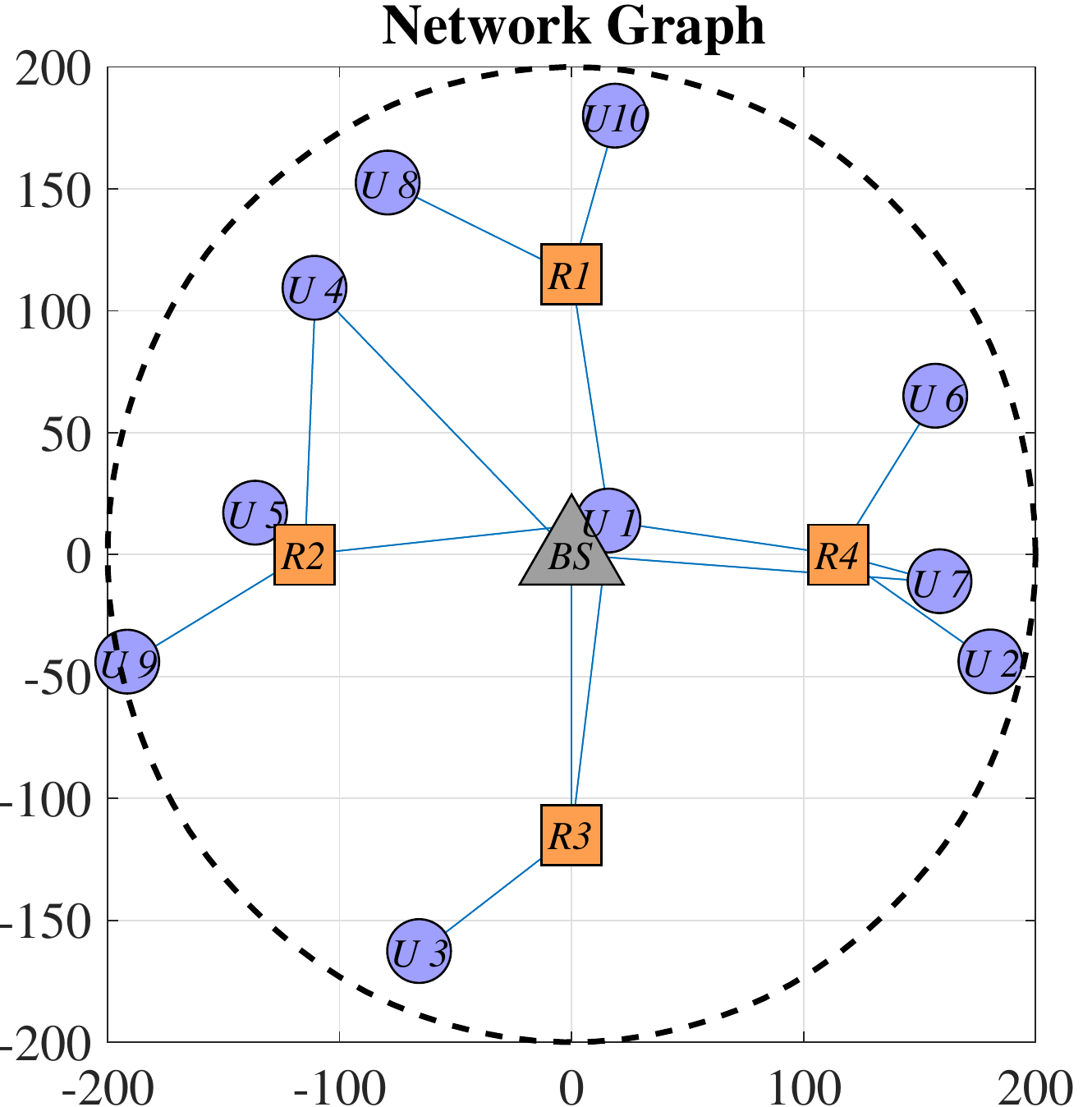}
    }
  \caption{Examples of simulated mmWave multi-hop heterogeneous picocells. A total of 50 drops were simulated.}
  \label{fig:topo}
\end{figure}

We assume two traffic flows are demanded by each UE: one uplink with source at the UE and destination at the BS, and one downlink with source at the BS and destination at the UE. All source rates apply ANCC (Alg. \ref{alg:anumcc}) with utility function $\mathscr{U}^f(x)=\frac{1}{2}\log(x)$ and the congestion control tuning parameter was set to $V=10 R_{\max}^2$ where $R_{\max}$ is the maximum link rate.

\begin{table}[!t]
  \centering
   \caption{mmWave Physical Layer Simulation Parameters}
   \label{tab:param_tab}
\begin{tabular}{r|l|l|l}
\textbf{Node Type} & \textbf{BSs} & \textbf{RNs} & \textbf{UEs}                                         \\ \hline	
Carrier Frequency  & \multicolumn{3}{c}{28 GHz}                                                     \\\hline
System Bandwidth   & \multicolumn{3}{c}{1 GHz}                                                       \\\hline
Transmission Power & 30 dBm & 25 dBm & 20 dBm                        \\\hline
Receiver Noise Fig.       & 5 dB & 6 dB & 7 dB                               \\\hline
Planar ant. array           & 8x8 & 6x6 & 4x4  \\\hline
Connections       & \multicolumn{2}{c|}{Pathloss $< 200$ dB} & No UE-UE \\
\end{tabular}
\end{table}

We compare the sum throughput and sum-utility in the network under our MU-MIMO scheduling model versus two more limited scheduling models in prior works \cite{juanScheduling} and  \cite{gomezITAoptimal}:
\begin{itemize}
 \item \textit{Benchmark 1:} \textbf{1-to-1 mmWave network \cite{juanScheduling}}. Each node can only participate in one link at a time as in Fig. \ref{fig:1t1schedex}. The set of valid schedules is $\mathcal{P}=\{\pp\in\{0,1\}^{L}\textnormal{ s.t. }\sum_{m\in\mathcal{A}(n)}\|p_{n,m}\|_0\leq1, \sum_{n\in\mathcal{A}(m)}\|p_{n,m}\|_0\leq1\}$ . $\mathcal{P}$ is equivalent to the set of all the \textit{matchings} over the graph $\mathcal{G}(\mathcal{N},\mathcal{L})$. mmWave interference is negligible and full power is allocated to a single link for each transmitter. Thus, links are associated with a queue pressure weight $w_{n,m}=\min(R_{n,m},q_{n}^f)\max_f(q_n^f-q_m^f)$ and the MWM algorithm implements the MBP scheduler.
 \item \textit{Benchmark 2:} \textbf{K-to-1 mmWave network \cite{gomezITAoptimal}}. Transmitters can only activate one link but receivers can use MU-MIMO fully as in Fig. \ref{fig:1tkschedex}. The set of valid schedules is $\mathcal{P}=\{\pp\textnormal{ s.t. }\|\pp_{n}\|_0\leq 1\}$ where $\pp_n\triangleq(p_{n,m})_{m\in\mathcal{A}(n)}$ is the subvector of power allocation at $n$. The $\ell_0$ norm $\|\pp_{n}\|_0\leq1$ means that $\pp_n$ must be either all zeros (receiving) or have one element set to $1$ and $|\mathcal{A}(n)|-1$ elements set to zero. In \cite{gomezITAoptimal} we employed a heuristic Message Passing algorithm, while in this paper we simulate the K-to-1 benchmark using MILP with the additional constraint $\|\pp_{n}\|_0\leq1$, which implements the optimal MBP scheduler.
 \item \textit{Proposed Model:} \textbf{MU-MIMO mmWave network}. Both transmitters and receivers can use MU-MIMO fully as in Fig. \ref{fig:mumimoschedex}. The set of valid schedules is $\mathcal{P}=\{\pp\in\mathbb{R}^L \textnormal{ s.t. } p_{n,m}=s_{n}(1-s_m)/\mathcal{A}(n)\}$. Differently from the benchmarks above, in this case the assumption that power is fixed is a constraint that we \textit{choose}, so that the rates do not vary with $\pp(t)$. This allows to reduce the MBP problem to MWDBSG and makes it tractable using a MILP toolbox \eqref{eq:milproblem}. The link weights for this case, $w_{n,m}=\min(R_{n,m},\xi_{n,m}q_{n}^f)\max_f(q_n^f-q_m^f)$, differ from the weights used for MWM in the 1-to-1 benchmark in the parameter $\xi_{n,m}$.
\end{itemize}
 
\begin{figure}[!t]
 \centering
 \subfigure[1-to-1 - MWM]{\includegraphics[width=.6\columnwidth]{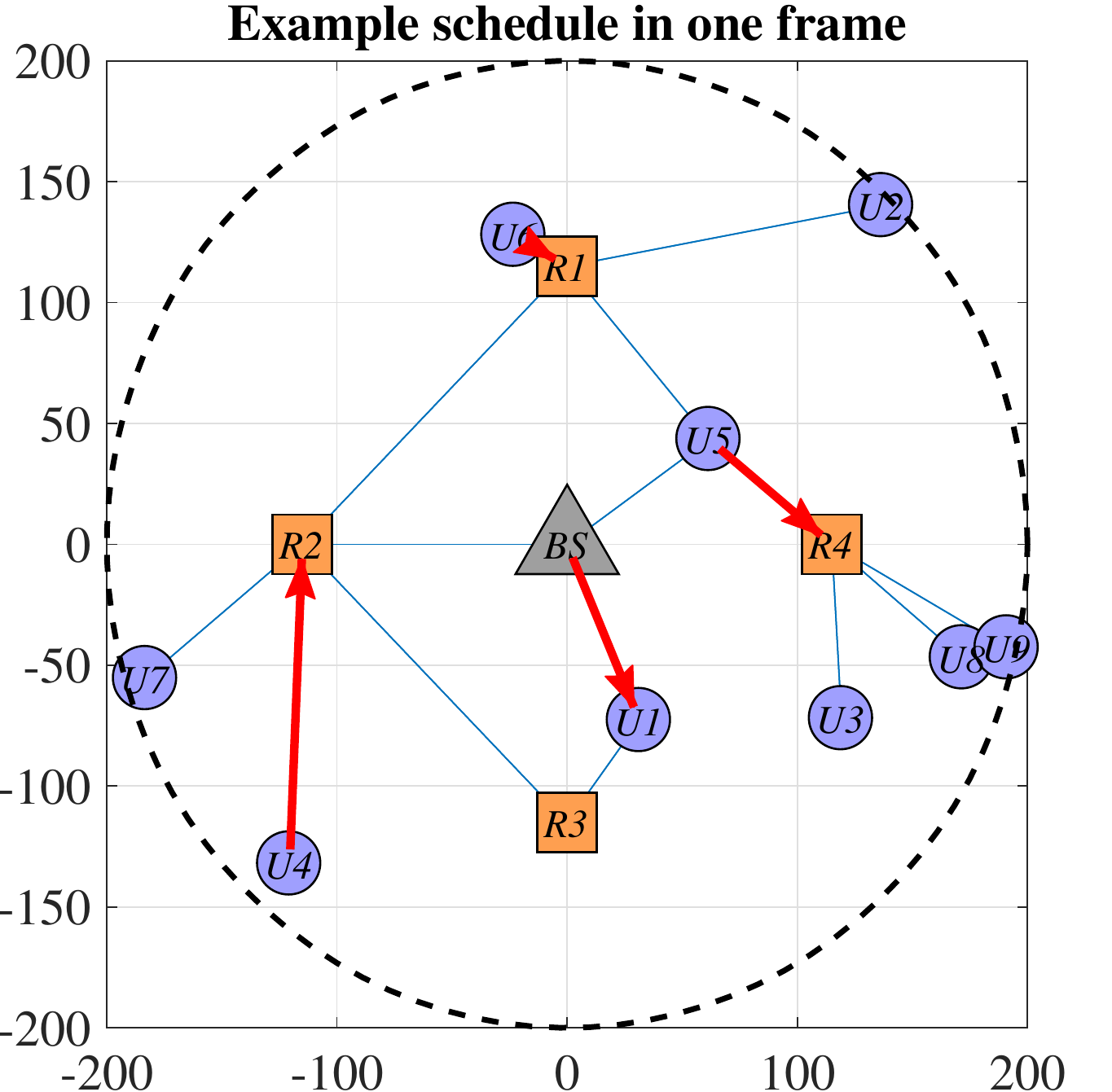}\label{fig:1t1schedex}}
 \hspace{.2in}
 \subfigure[K-to-1 - MILP]{\includegraphics[width=.6\columnwidth]{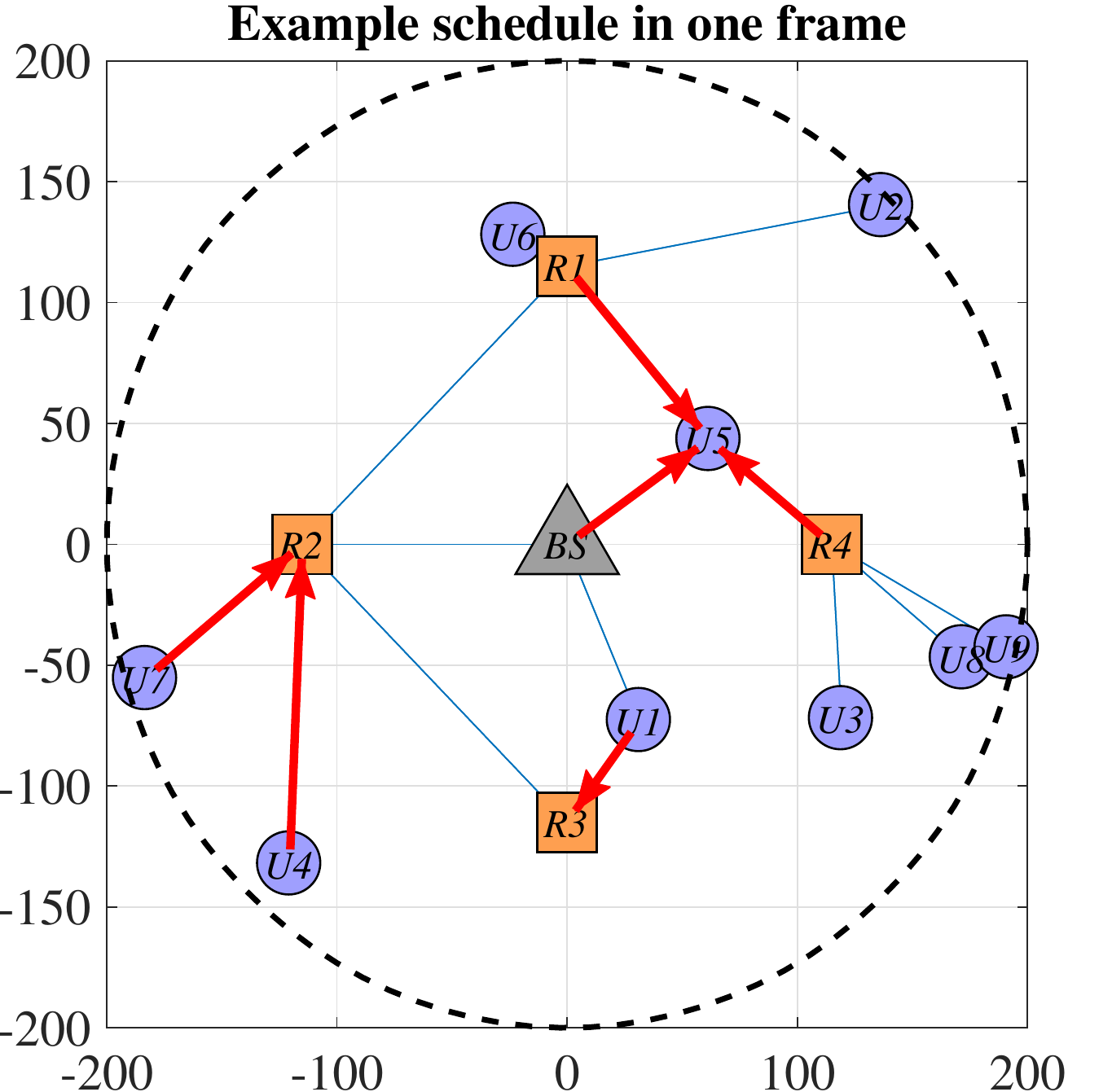}\label{fig:1tkschedex}}
 \hspace{.2in}
 \subfigure[MU-MIMO - MILP]{\includegraphics[width=.6\columnwidth]{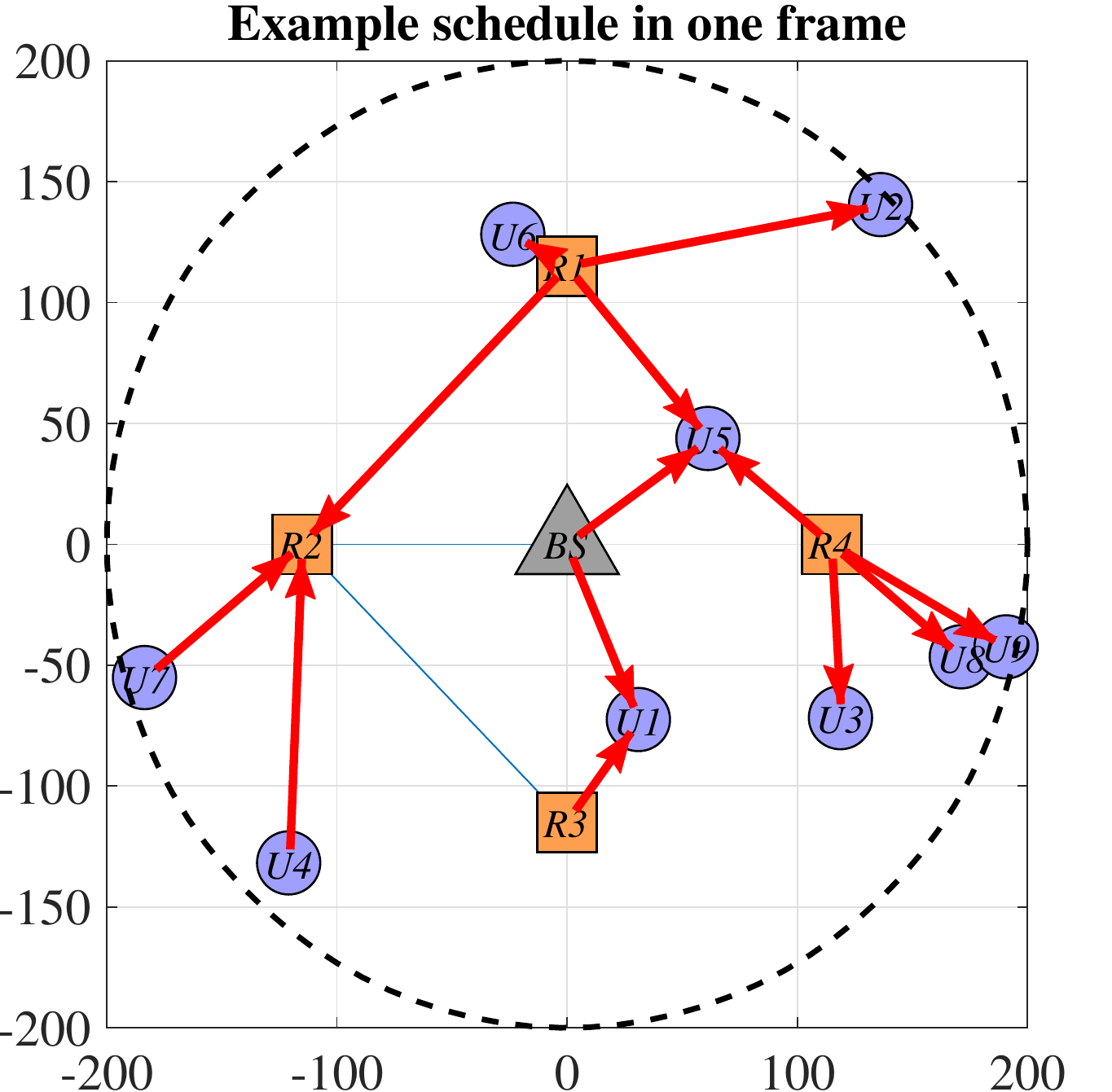}\label{fig:mumimoschedex}}
 \caption{Scheduling constraints in the three multi-hop mmWave picocellular network scheduling models we compare.}
\end{figure}

\subsection{Long-term behavior of the MU-MIMO MBP schedulers}

We first illustrate the behavior of MBP NUM scheduling as a long-term steady-state Markov process network analysis. We consider one of the drops we simulated as an example. We remark that the MBP scheduler is not guaranteed to be a ``good short-term scheduler'' since it experiences a slow start and is not adequate if we want to maximize the performance only in the first frames. Nevertheless, thanks to the aversion of MBP and ANCC to large queue lengths, after the network operates for a sufficient time it converges to a permanent regime where the network state returns, in an average finite time, to states with NUM optimal rates and queues bounded by a constant.

We represent the convergence of the sum of all queue lengths towards a finite upper bound for the three scheduling constraint models in Fig. \ref{fig:allqueues}. We note that the trend in queue lengths does not change significantly between the models. While MBP scheduling guarantees stability, the ANCC algorithm modifies the exogenous arrival rates at the sources in order to find the NUM optimal rates solving \eqref{eq:EPSproblem}. Since the three models change the scheduling constraints, the stability region is increased in the MU-MIMO - MILP model compared to the K-to-1 MILP model, and both of them represent an increase compared to the 1-to-1 MWM model. We see in Fig. \ref{fig:allinputs} that the ANCC converges towards a greater sum rate in the network with MU-MIMO. As the network is stable, traffic is conserved and the mean sum throughput at the destinations in  Fig. \ref{fig:alloutputs} coincides with the ANCC rate at the sources in Fig. \ref{fig:allinputs}, although the instantaneous throughputs at the destinations are more irregular. Finally, the three simulations employ a ``proportional fair'' utility function in ANCC and the division of the sum rates among the users displays similar shapes for all three models in Fig. \ref{fig:fairrates}.

\begin{figure}[!t]
 \centering
 \subfigure[Sum of all queues]{\includegraphics[width=.8\columnwidth]{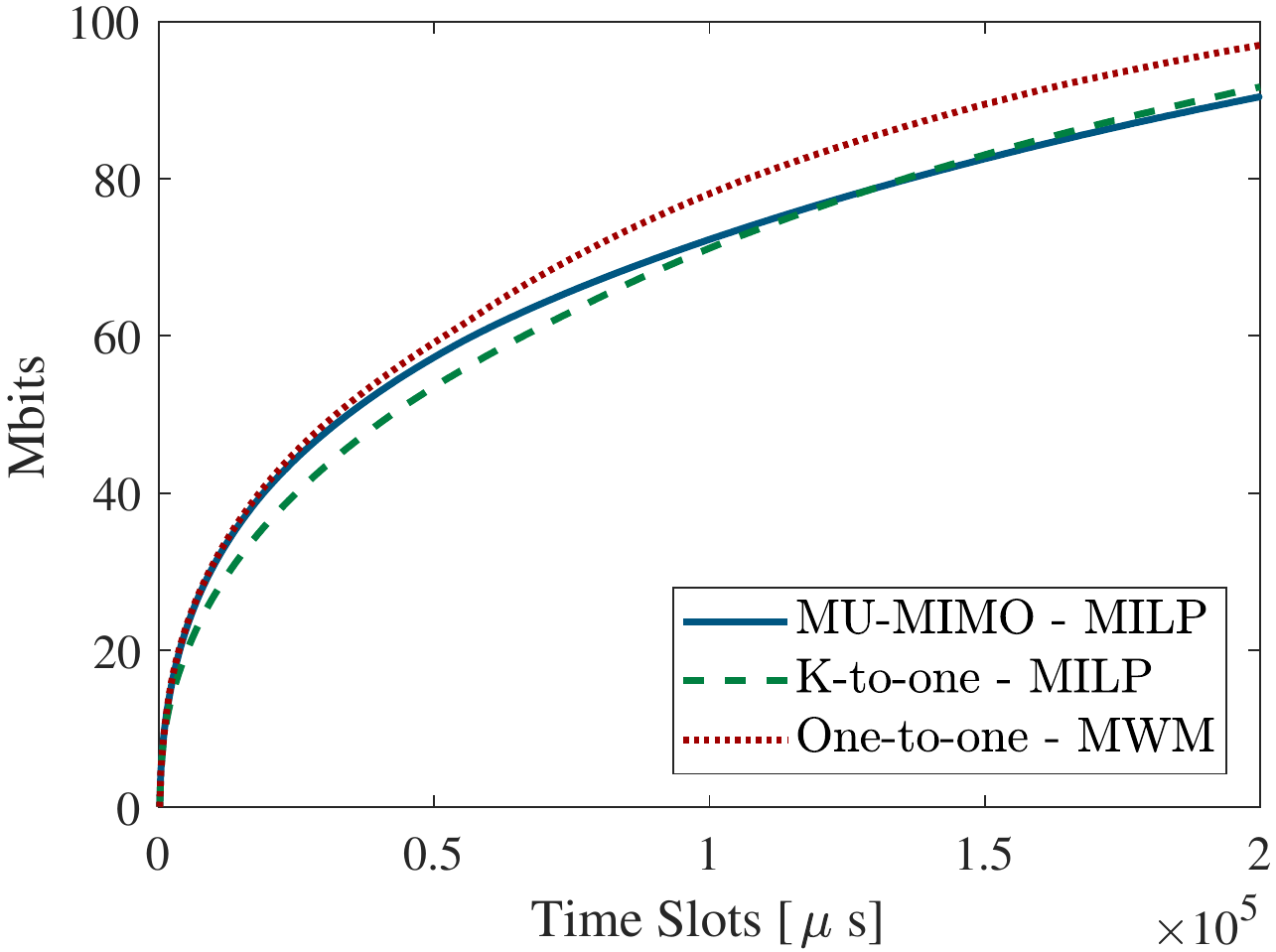}\label{fig:allqueues}}
 \hspace{.2in}
 \subfigure[Sum rate arrival at the sources]{\includegraphics[width=.8\columnwidth]{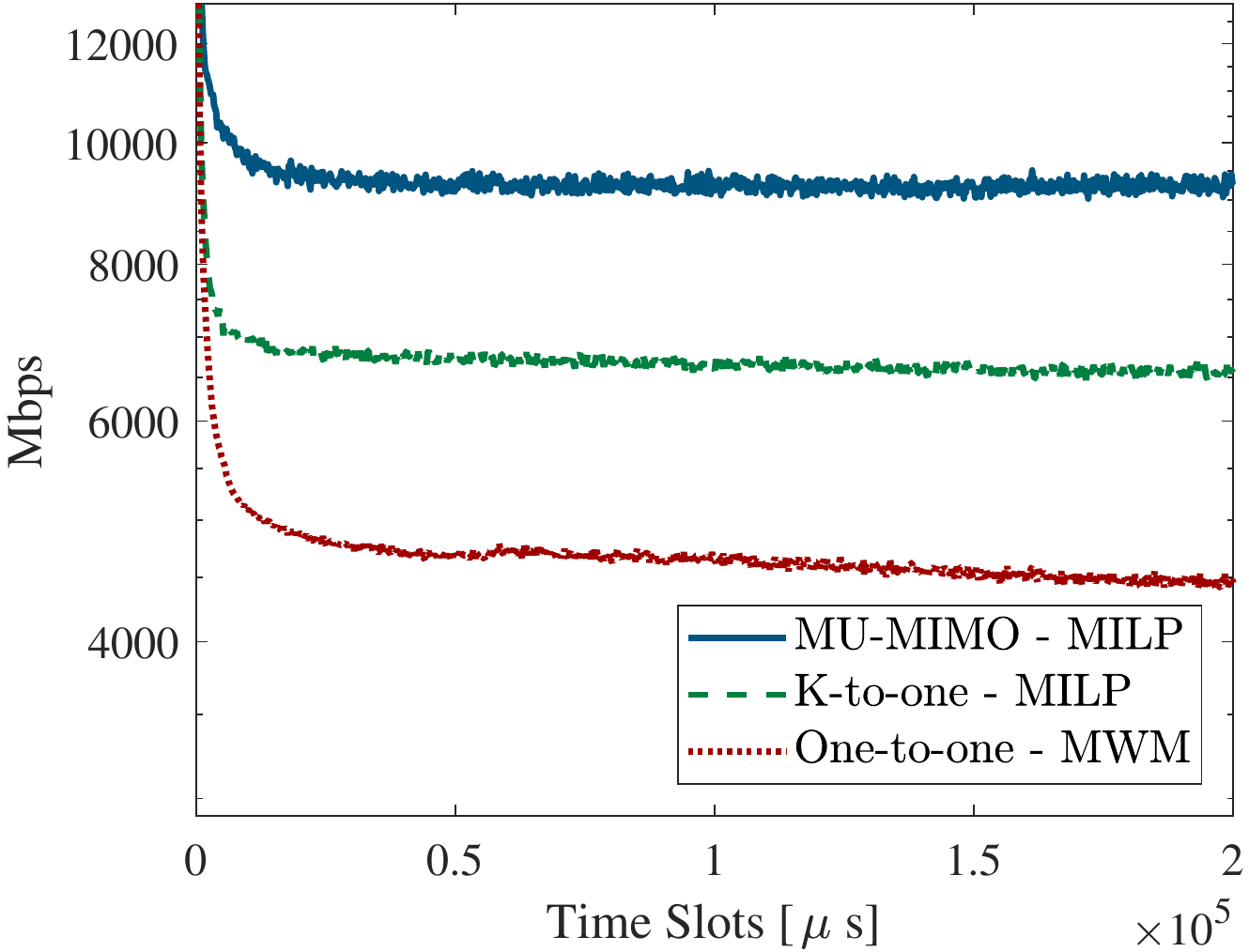}\label{fig:allinputs}}
 \hspace{.2in}
 \subfigure[Sum throughput at the destinations]{\includegraphics[width=.8\columnwidth]{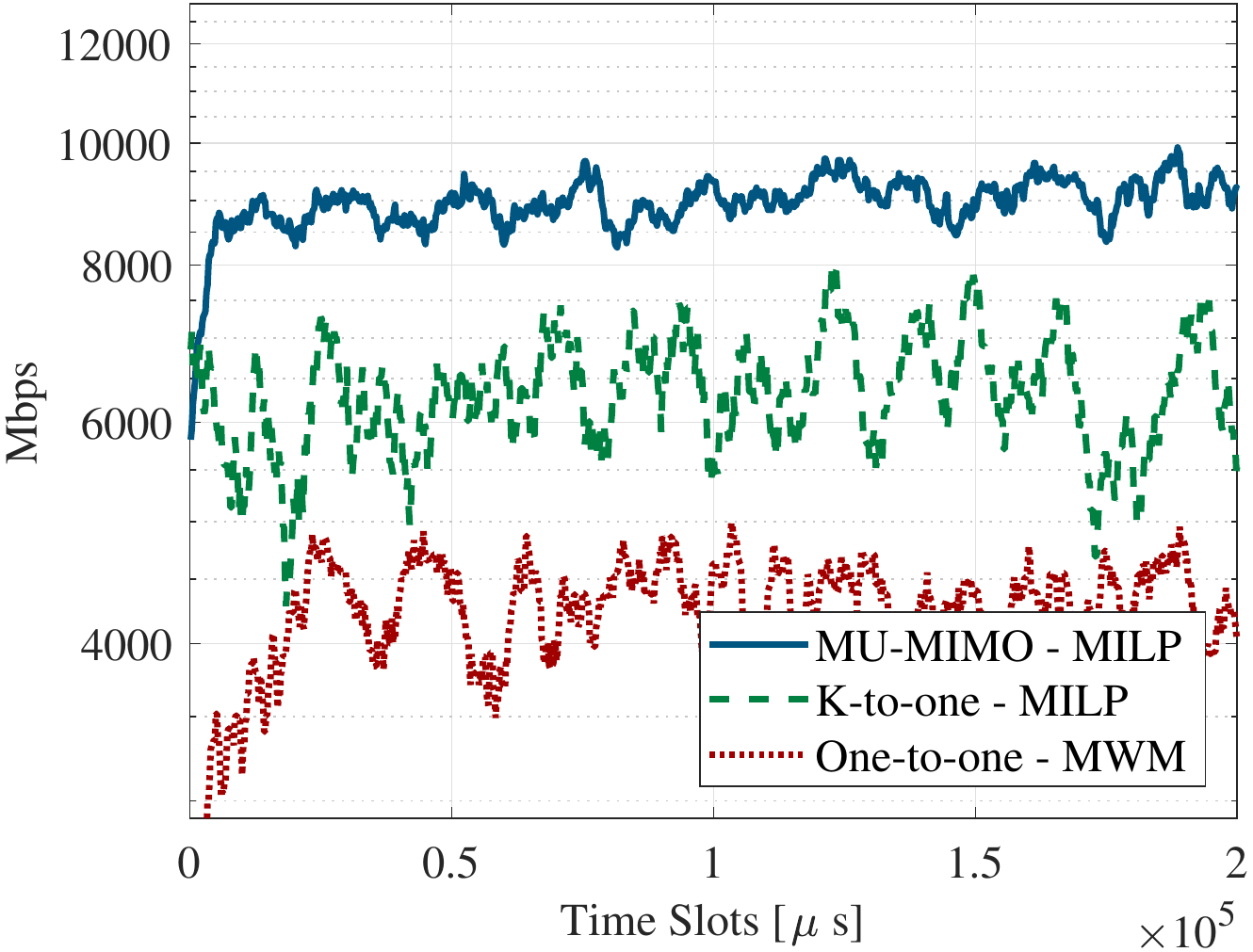}\label{fig:alloutputs}}
 \caption{Variation over time of $\|\q(t)\|_1$, $\|\lambdab(t)\|_1$ and sum throughput arriving at the flow the destinations.}
\end{figure}

\begin{figure}[!t]
 \centering
 \subfigure[1-to-1 - MWM]{\includegraphics[width=.8\columnwidth]{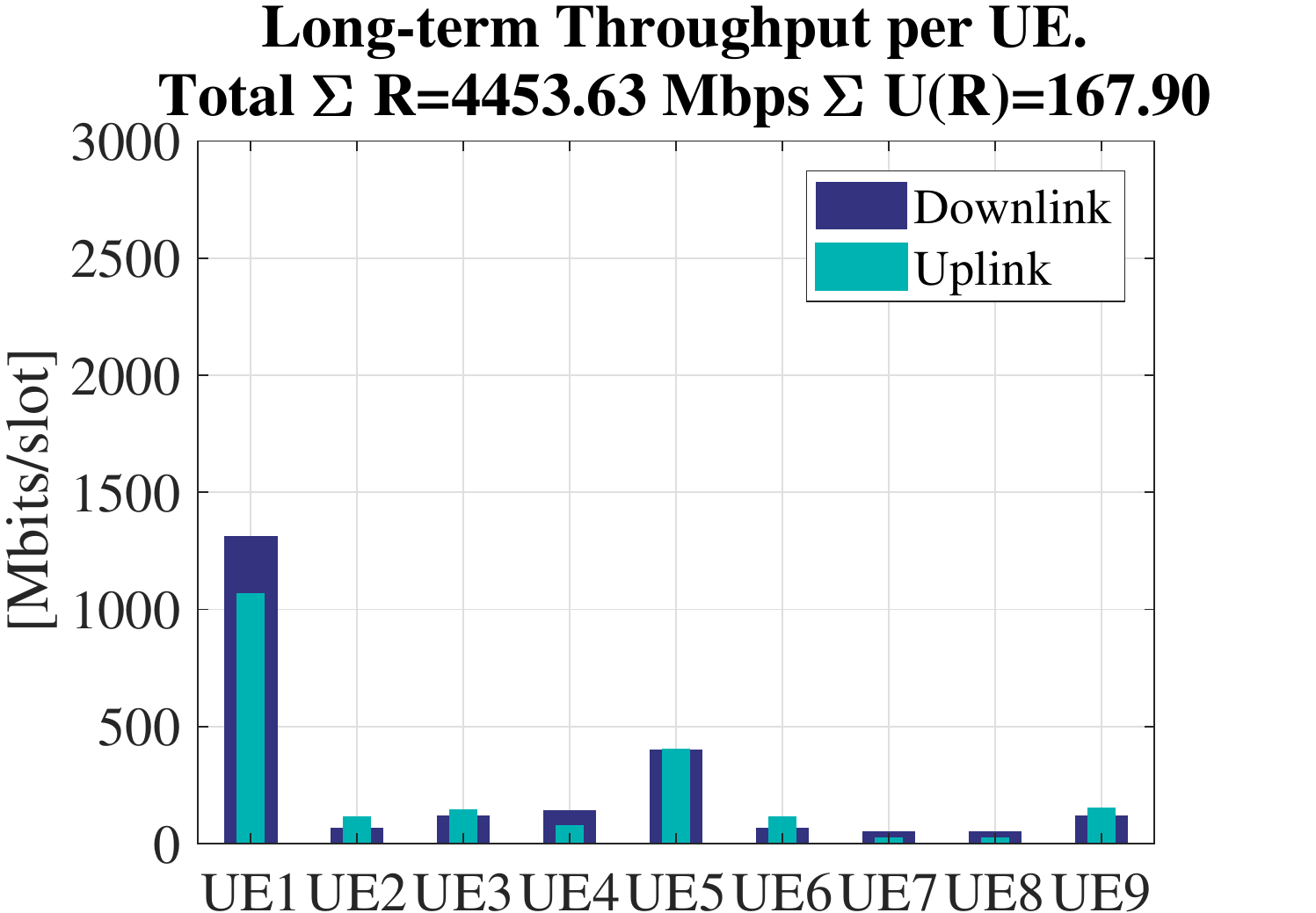}}
 \hspace{.2in}
 \subfigure[K-to-1 - Message Passing]{\includegraphics[width=.8\columnwidth]{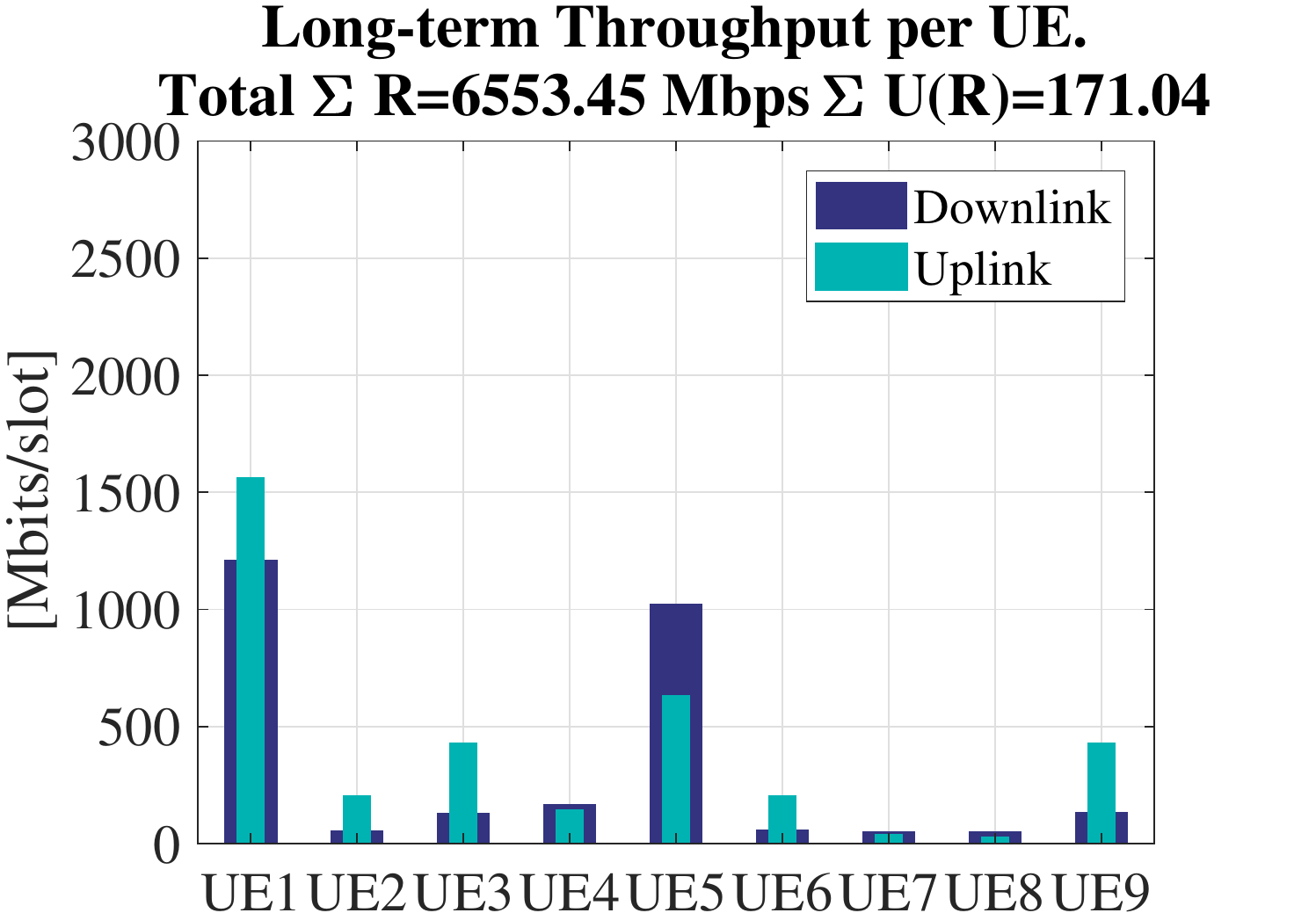}}
 \hspace{.2in}
 \subfigure[MU-MIMO - MILP]{\includegraphics[width=.8\columnwidth]{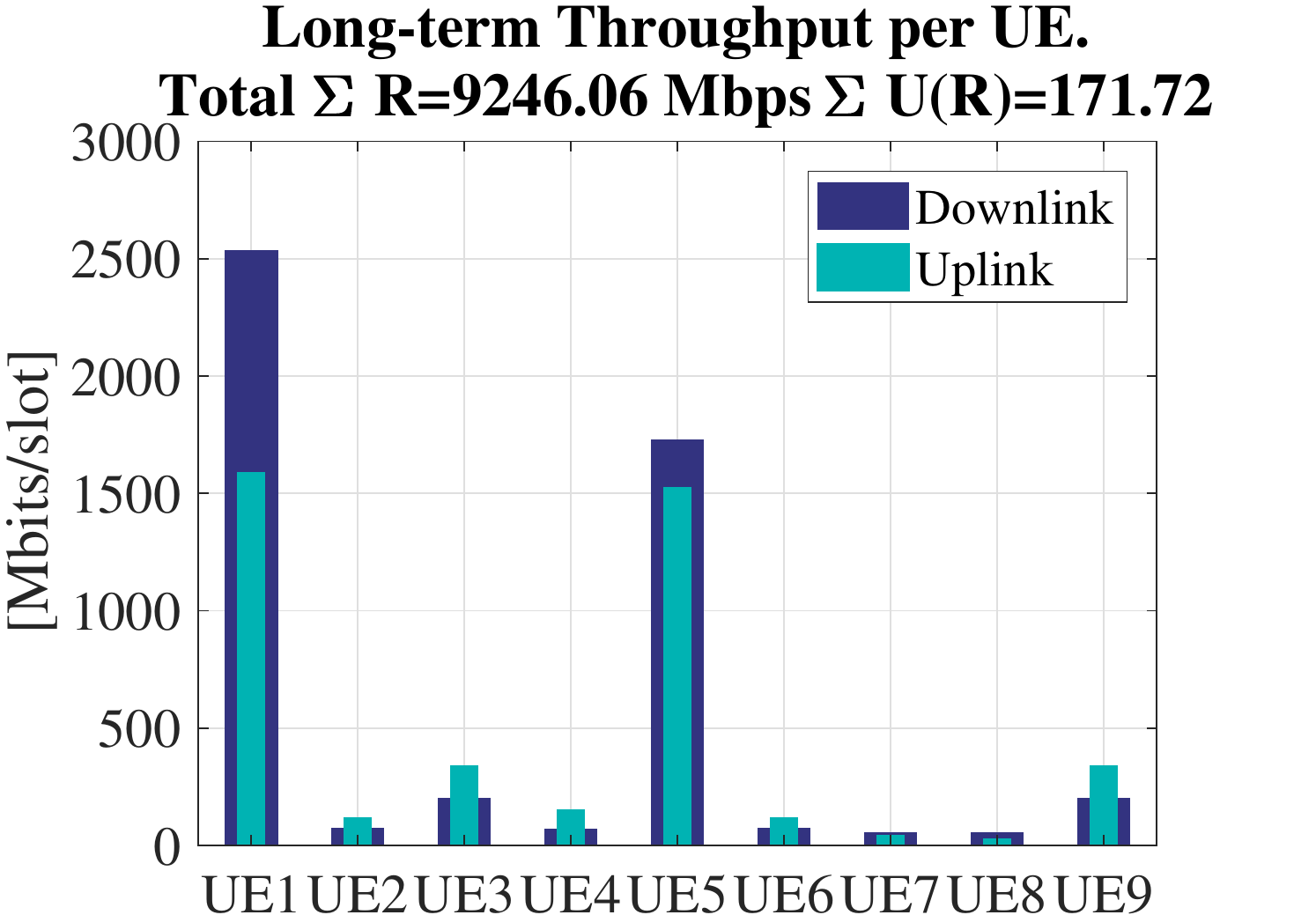}}
 \caption{Rates achieved by each user in the simulation of one realization of the random network. The fairness is similar since the user locations are the same and the rate adaptation gives users a similar division of the network capacity in all models.}
 \label{fig:fairrates}
\end{figure}

\subsection{Characterization of Multi-hop MU-MIMO gain for mmWave Systems}

We repeat our simulations for 50 randomly generated network drops and display the achieved sum rate and sum-network-utility values in Figs. \ref{fig:sumrate} and \ref{fig:sumutility}. Each of our randomly generated drops represents one possible network layout, and the average and standard deviation of the rates across all the drops represents the average performance that mobile users can expect over longer periods of time when the network topology changes. In the left margin of Fig. \ref{fig:sumrate} we have represented the average and standard deviation of the sum rates across all network layouts. We note that the K-to-1 MILP scheduling improved the sum rate in random mmWave multi-hop picocellular network by approximately 90\% over a 1-to-1 MWM model. Moreover, the optimal MU-MIMO - MILP scheduler improved rates even further to a total gain of 160\% over the 1-to-1 model. This improvement stems from the fact that enabling MU-MIMO at the transmitters dramatically increases the spatial multiplexing in the network. In addition to the average rate gain over all network realizations, we note that the gains are consistent and that in almost all simulated networks, individually, MU-MIMO outperformed the K-to-1 and both of these always outperform the 1-to-1 model.

\begin{figure}[!t]
 \centering
 \includegraphics[width=.9\columnwidth]{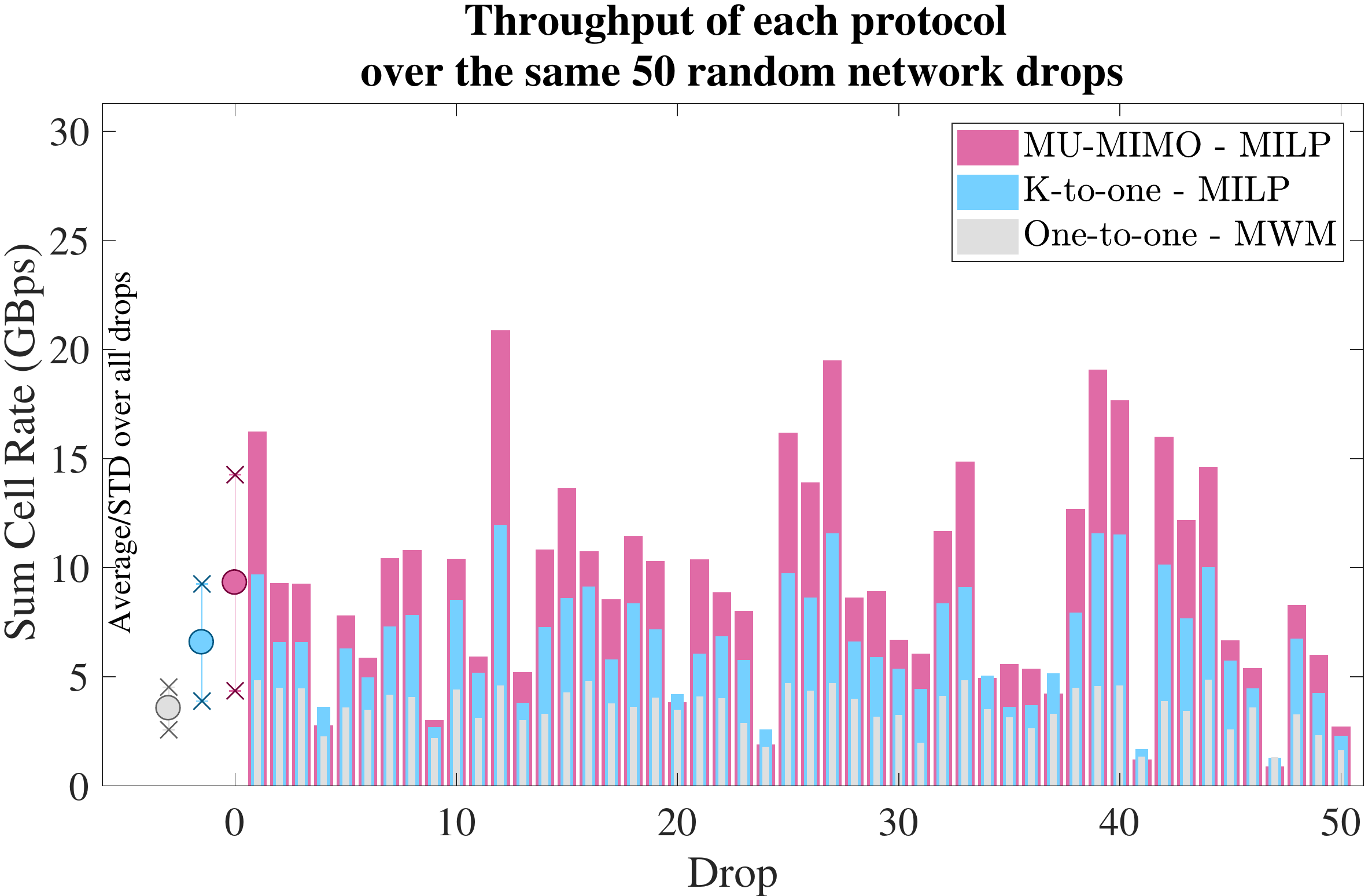}
 \caption{Sum-Rate in 50 randomly generated mmWave picocell layouts.}
 \label{fig:sumrate}
\end{figure}

We remark that the rate gains reported in Fig. \ref{fig:sumrate} do not come at the expense of fairness. In Fig. \ref{fig:sumutility} we show that the total sum network utility improved under MU-MIMO as well, where we have used a proportional fair utility function. Therefore, the new scheduling model does not achieve these additional rates by penalizing poorly located users, but by opening up new spatial multiplexing opportunities in the network scheduling constraints, expanding the network throughput capacity region $\mathcal{X}$, while the long-term average user rates vector $\x\in\mathcal{X}$ is driven to the border of the capacity region in a proportional fair way by the ANCC rate adaptation.

\begin{figure}[!t]
 \centering
 \includegraphics[width=.9\columnwidth]{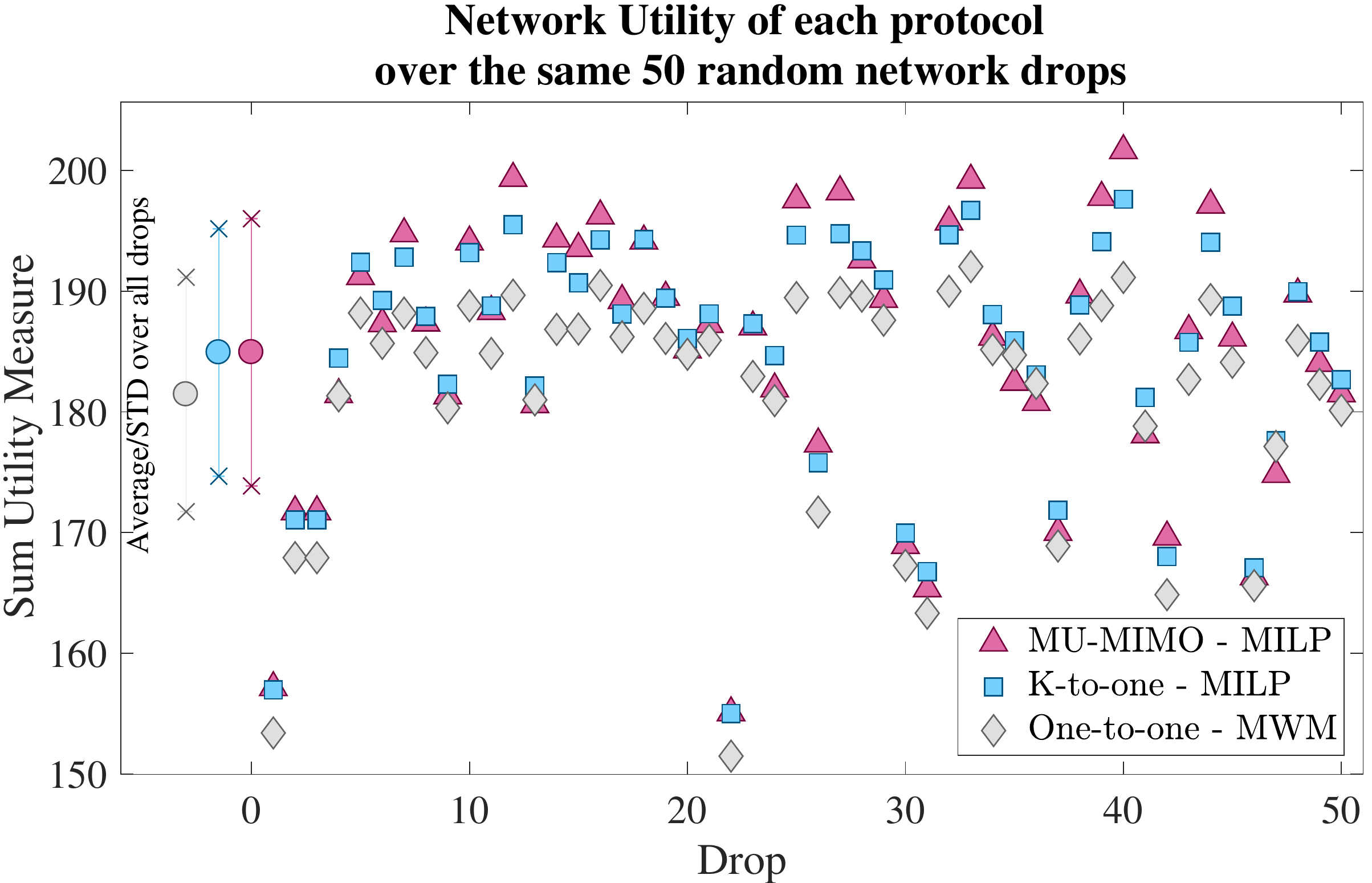}
 \caption{Sum Network Utility achieved with $\mathscr{U}^f(x)=\frac{1}{2}\log(x)$ (proportional fair utility).}
 \label{fig:sumutility}
\end{figure}

In summary, our numerical simulations have shown the importance of providing support for a MU-MIMO physical layer in scheduling for multi-hop mmWave picocellular systems with IAB RNs. While the traditional literature on NUM throughput-optimal multi-hop scheduling has focused on wireless networks with omnidirectional radios and a one-to-one constraint, such as sensor networks, significant increments in network capacity can be achieved if mmWave cellular systems can combine multi-hop and the MU-MIMO physical layer techniques that are frequently used in sub-$6$ GHz single-hop cellular networks, even under a na\"ive power allocation restriction. The extension of our MU-MIMO scheduler to waterfilling and other advanced power allocation techniques can only make the benefits greater.

\subsection{Effect of Interference }
\label{sec:interfeffect}
We note that our theoretical proof that the general form of MBP in \eqref{eq:MBP} is the optimal scheduler remains valid even in the presence of interference. We must assume that  interference is negligible in order to be able to solve \eqref{eq:MBP} using the MILP toolbox. However, one particular \textit{suboptimal} scheduling technique that supports arbitrary interference is Pick and Compare (PaC) without MU-MIMO as proposed in \cite{juanScheduling}. PaC is not optimal in the sense that it does not solve \eqref{eq:MBP}, however for the 1-to-1 case (no MU-MIMO) it has been proven in \cite{juanScheduling} that PaC is throughput optimal (i.e., the queues are stable even though the optimal solution of \eqref{eq:MBP} is not selected).

In Fig. \ref{fig:interfsanitycheck} we simulate the user rates over $10^5$ frames using PaC (no MU-MIMO) assuming interference is negligible (Interference Free) vs the same simulation with the Actual Interference model as in \cite{juanScheduling}. We observe that the user throughput is almost identical in the two scenarios. Therefore, network-level simulation results obtained assuming that there is no interference in the mmWave network are representative, and very similar to, the actual rates that would be observed in a real network where some (small but non-zero) interference is present.

We note that the PaC algorithm of \cite{juanScheduling} does not support MU-MIMO and, in this sense, our ``sanity check'' is not fully comprehensive of the MU-MIMO problem. However, we argue that, since the random channels are independent in our model, the probability that two MU-MIMO links that have a node in common are ``aligned'' (in the sense of having a large cross-beam interference) would not be higher than the probability of the same event for two links without nodes in common in a 1-to-1 simulation. Therefore, our MU-MIMO numerical simulation without interference is, also, an approximated evaluation of the rates in MU-MIMO mmWave IAB picocellular networks with weak interference.

\begin{figure}
  \centering
  \subfigure[Interference Free]{
    \includegraphics[width=.8\columnwidth]{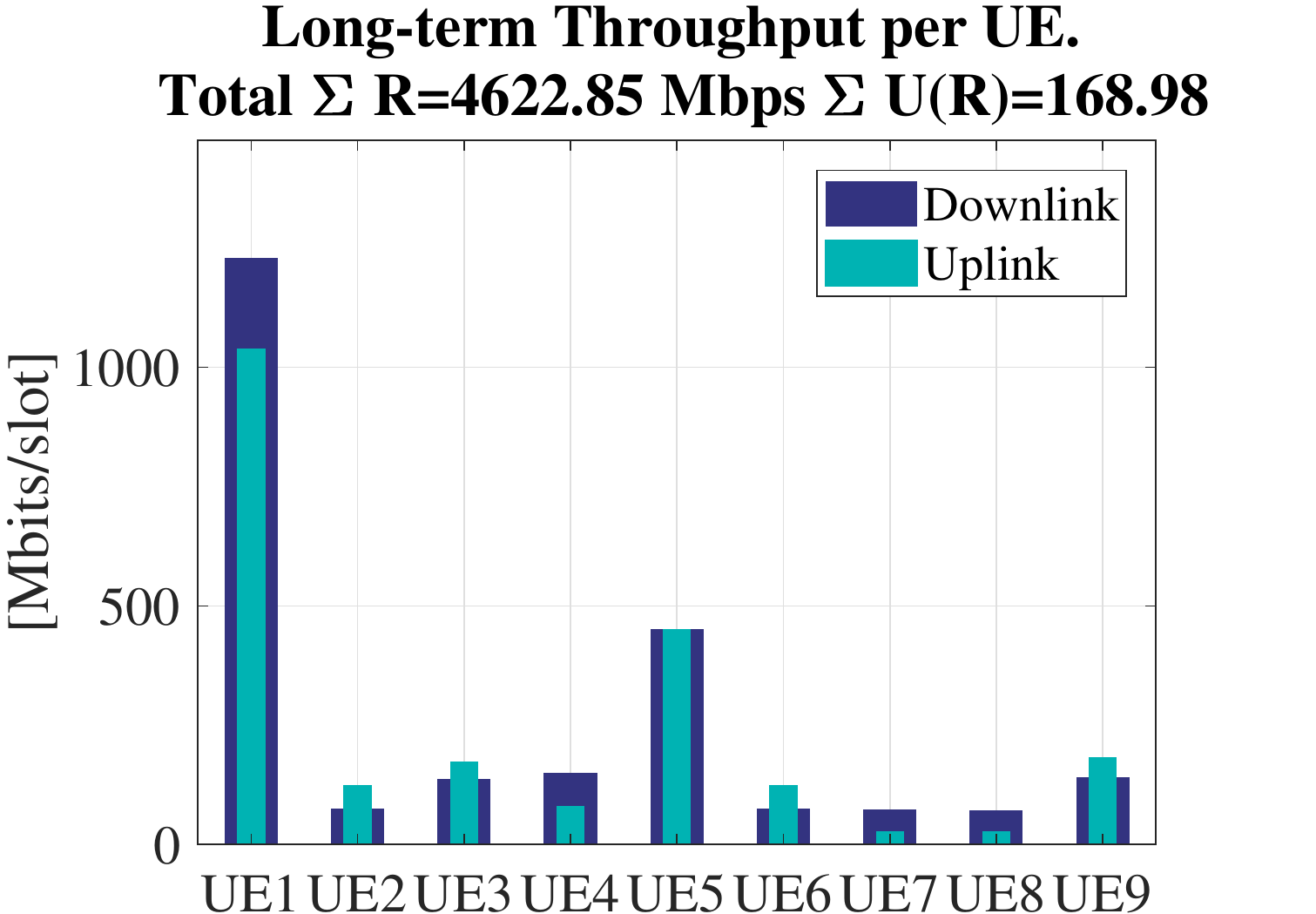}
  \label{fig:nointerf}
  }
  \subfigure[Actual Interference]{
    \includegraphics[width=.8\columnwidth]{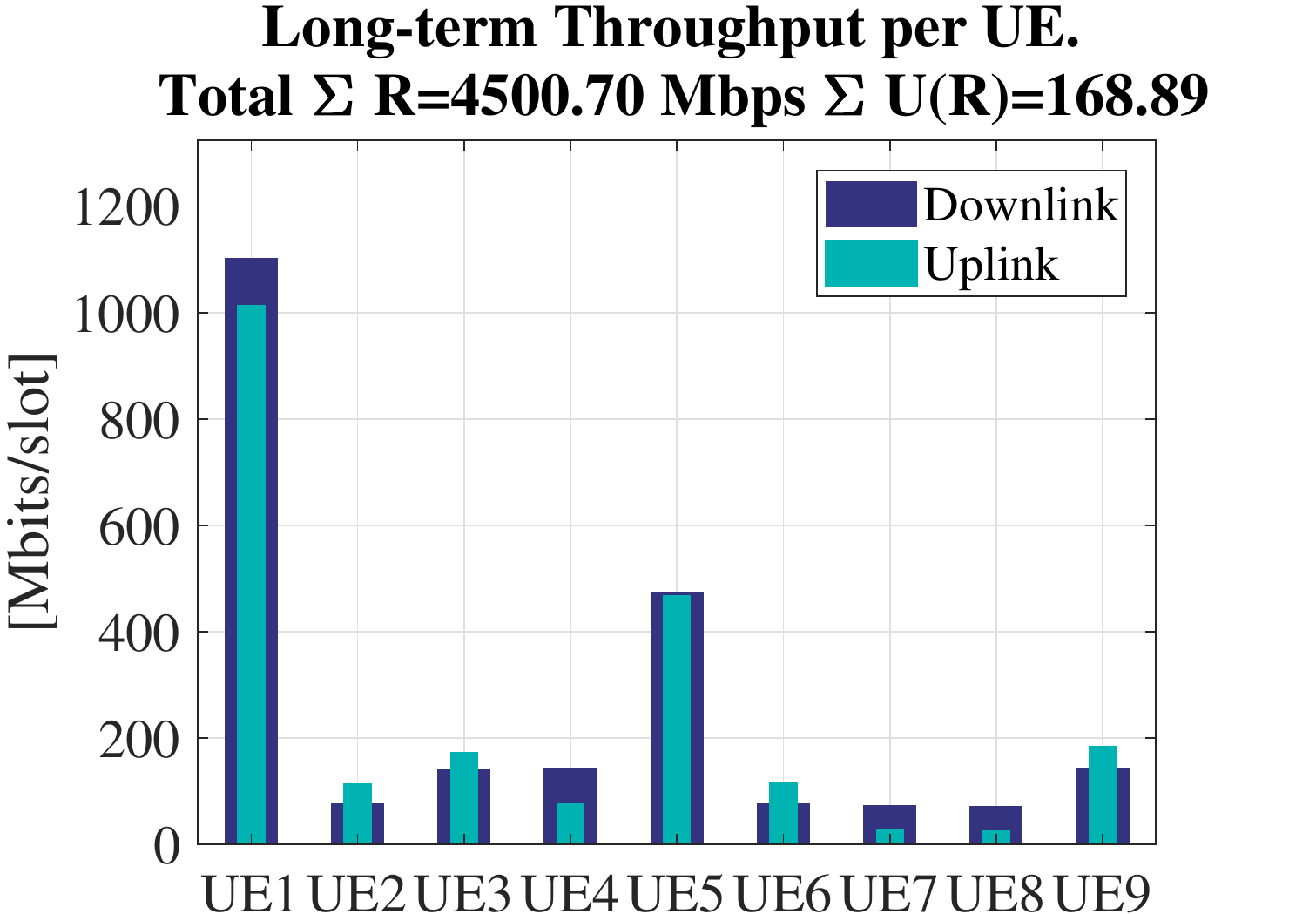}
  \label{fig:actualinterf}
  }
  \caption{Effect of interference in user rates using PaC scheduling (no MU-MIMO) as in \cite{juanScheduling}.}
  \label{fig:interfsanitycheck}
\end{figure}

\section{Conclusions and Extensions}
\label{sec:conclusion}

Future mmWave 5G picocellular networks with IAB RNs require a combined scheduling framework that harmonizes the existing models in MU-MIMO physical layers for conventional cellular systems and in multi-hop one-to-one NUM scheduling literature. We have generalized the classic multi-hop scheduling framework to MU-MIMO and proved the throughput and NUM optimality of a MU-MIMO MBP scheduler with ANCC rate adaptation. The generalized scheduler leads to a problem separation into two subproblems: a state selection for each node, between transmitter and receiver, and a power allocation at the transmitters. Our theoretical proof that MU-MIMO MBP is an optimal scheduler is fully general and admits arbitrary power allocation and non-negligible interference in the link rates. However, under such general conditions the implementation of MU-MIMO MBP as an optimization is challenging.

In the particular case of mmWave, interference is negligible and approximate link rates depend only on the power allocation by the desired-signal transmitter, but not on the power allocations by the interferers. Thanks to this simplification, the power allocation problem can be solved locally at each transmitter. Thus, the MU-MIMO MBP scheduling optimization consists in finding the optimal Directed Bipartite SubGraph of the network. Nonetheless, this problem is still hard  since the power allocations and link rates take different values for each DBSG. 

It can be shown that MU-MIMO MBP is still a throughput optimal scheduler if we incorporate more constraints in the power allocation subproblem, or even if we choose a fixed power allocation strategy. Thanks to this, we make the MU-MIMO MBP scheduler a tractable problem by assuming a fixed-power limitation on the transmitter radios. Under this assumption the MU-MIMO MBP scheduler becomes the Maximum Weighted DBSG problem, which can be converted into a MILP problem and attacked with standard toolboxes. Thus, the extension to MU-MIMO with fixed power allocation of the conventional one-to-one scheduling model leads to the substitution of the Maximum Weighted Matching problem by the MWDBSG problem.

We compared the mmWave multi-hop picocellular network throughput under one-to-one, K-to-one, and MU-MIMO scheduling constraints models, finding that MU-MIMO enables a throughput increase of 160\% over the one-to-one model, whereas our prior attempt in a previous work using a K-to-one scheduling constraint model had at most a 90\% gain. In all simulations proportional-fair rate adaptation was employed, and the sum throughput gains in each cell emerged from the creation of new spatial multiplexing opportunities thanks to the MU-MIMO physical layer, not the penalization of poorly located users. 

In future work we intend to relax a number of assumptions made in this paper. First, the assumption of fixed power allocations is one we had to adopt for the sake of tractability, and the implementation of MU-MIMO MBP scheduling solvers with arbitrary power allocation remains open. Second, we used a MILP toolbox that would require centralized scheduling in practical deployments, and the implementation of nearly-optimal schedulers using distributed message-passing or random pick-and-compare strategies is left for future work. Third, even though references have reported that average rates with and without interference are ``nearly identical,'' the extent of this similarity should be further explored in the future via the implementation of MU-MIMO MBP with arbitrary power allocation in the presence of interference. Fourth, the NUM MBP framework focuses on the steady-state distribution over a large number of frames, and does not offer any performance guarantees during the first few frames of operation. Therefore, the design of ``good short term schedulers'' should also be considered.

\appendices

\section{mmWave Channel and Physical Layer Model}
\label{app:phy}

\begin{figure*}[!t]
 \centering
 \includegraphics[width=.65\textwidth]{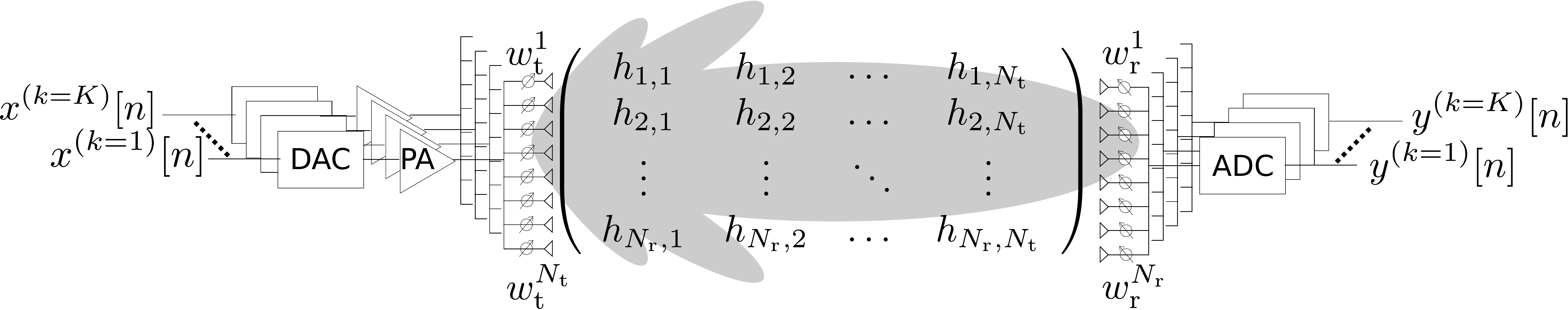}
 \caption{Analog SDM/SDMA scheme with $K$ independent transmitter-beamforming and $K$ independent receiver-beamforming signal ports. Both transmitter and receiver use each port $k\in[1,K]$ for an independent link with a different neighbor node.}
 \label{fig:analogBF}
\end{figure*}

For each node $n$ we assume a mmWave physical layer with $N_a(n)$ antennas. Hybrid analog-digital transceivers are used in transmission and reception, to avoid the high power consumption of fully-digital MIMO \cite{Orhan2015,Mo2016,Abbas2016}. The hybrid analog-digital SDM/SDMA transceiver architectures represented in Fig \ref{fig:analogBF} are assumed. Each node has $K(n)\ll N_a(n)$ independent radiofrequency (RF) digital transmission signal ports connected to a Digital-to-Analog converter, a power amplifier and an analog phased-array block with $N_a$ attenuators and phase-shifters. Each of these $K(n)$ ``RF chains'' is connected to the same antenna array. Reciprocally, at the receiver each node has a common antenna array connected to $K(n)\ll N_a(n)$ parallel RF chains with phase shifters and attenuators, analog combiner units, receive filters and ADCs. We assume that each node has $\mathcal{A}(n)\leq K(n)$ neighbors and both transmitters and receivers can process the signals of all their neighbors simultaneously.

Each transmitter sends up to $K(n)$ independent signals, denoted by $x_{n,1}[t]\dots x_{n,K}[t]$, to $K(n)$ \textbf{different} destinations, denoted by $ m(n,1)\dots m(n,K)$. The use of multiple ports for space division multiplexing towards the same neighbor is of no interest in mmWave because the channel matrix is usually rank-deficient. The transmitter $n$ divides the total transmit power at the node, $P_n$, among the different signals satisfying $\sum_{k=1}^{K}p_{n,m(n,k)}\leq 1$ where $p_{n,m(n,k)}P_n=\Ex{}{|x_{n,k}[t]|^2}$.

We denote by $\mathcal{T}(m)$ the set of transmitters that send signals to receiver $m$, and denote by $\mathcal{R}(n)$ the set of receivers that are targeted by signals from transmitter $n$. The port index number $k$ does not affect link rates, so for notation clarity we drop the index $(n,k)$ in our notation and denote the signal from transmitter $n$ to receiver $m$ as $x_{n,m}[t]$ and its power by $p_{n,m}P_n$. Since we assume that all neighbors have an associated port, the set $\mathcal{T}(m)\setminus \{n\}$ contains all the nodes other than $n$ that introduce interference into the link $(n,m)$, which we write as follows
\begin{equation}
 \begin{split}
  y_{n,m}[t]&=(\w_{n, m}^{\mathrm{r}})^H\Hb_{n, m}\w_{n, m}^{\mathrm{t}}g_{n, m}x_{n,m}[t]\\
  &\;+\sum_{\substack{j\in \mathcal{R}(n)\setminus m}}(\w_{n, m}^{\mathrm{r}})^H\Hb_{n, m}\w_{n, j}^{\mathrm{t}}g_{n, m}x_{n,j}[t]\\
  &\;+\sum_{\substack{i\in \mathcal{T}(m)\setminus n}}(\w_{n, m}^{\mathrm{r}})^H\Hb_{i, m}\w_{i, m}^{\mathrm{t}}g_{i, m}x_{i}[t]\\
  &\;+\sum_{i\in \mathcal{T}(m)\setminus n}\sum_{\substack{j\in \mathcal{R}(i)\setminus m n}}(\w_{n, m}^{\mathrm{r}})^H\Hb_{i, m}\w_{i, j}^{\mathrm{t}}g_{i, m}x_{i,j}[t]\\
  &\;+z[t],
  \end{split}
\end{equation}

Here, the first term is the desired transmission from $n$ to $m$, where $n$ transmits the signal $x_{n,m}[t]$ with power $p_{n,m}P_n$,  $g_{n, m}$ is the macroscopic pathloss from $n$ to $m$, $\Hb_{n, m}$ is the $N_a(m)\times N_a(n)$ mmWave \textit{channel matrix} between $n$ and $m$, and $\w_{n, m}^{\mathrm{r}}$ and $\w_{n, m}^{\mathrm{t}}$ are the analog beamforming vectors used by $n$ to transmit to $m$ and by $m$ to receive from $n$, respectively. The second term represents the self-interference caused by $n$ transmitting other signals towards other receivers $\mathcal{R}(n)$, which arrive at $m$ with mismatched transmit beamforming vectors, weakening the interference. The third term represents the signals emitted towards $m$ by other transmitters $\mathcal{T}(m)$, which also leak into the $n,m$ channel as weakened interference due to mismatched receive beamforming vectors. The fourth term contains all other signals sent by neighbors of $m$ ($i\in\mathcal{T}(m)\setminus n$) towards destinations different from $m$ ($j\in\mathcal{R}(i)$). These terms are even weaker due to the use of both transmit and receive beamforming vectors mismatched to the channel matrix $\Hb_{i,m}$. The fifth term is Additive White Gaussian Noise (AWGN) with power spectral density $N_o$. 

Since the beamforming vectors of the three interference terms are not properly matched to the channel matrices $\Hb_{n, m}$ and $\Hb_{i, m}$, whereas the noise power is $N_o$ times a very large bandwidth ($WN_o$), we assume that the power of the interference terms is negligible compared to the noise
\begin{equation}
\begin{split}
\|z[t]\|^2&\gg I_{n,m}(\pp(t))\\
&\triangleq\bigg\|\sum_{\substack{j\in \mathcal{R}(n)\setminus m}}(\w_{n, m}^{\mathrm{r}})^H\Hb_{n, m}\w_{n, j}^{\mathrm{t}}g_{n, m}x_{n,j}[t]\bigg\|^2
\\&
\quad+\bigg\|\sum_{\substack{i\in \mathcal{T}(m)\setminus n}}(\w_{n, m}^{\mathrm{r}})^H\Hb_{i, m}\w_{i, m}^{\mathrm{t}}g_{i, m}x_{i}[t]\bigg\|^2
\\&
\quad+\bigg\|\sum_{\substack{j\in \mathcal{R}(n)\setminus m \\ i\in \mathcal{T}(m)\setminus n}}(\w_{n, m}^{\mathrm{r}})^H\Hb_{i, m}\w_{i, j}^{\mathrm{t}}g_{i, m}x_{i,j}[t]\bigg\|^2\\
\end{split}
\end{equation}

Considering this negligible interference simplification we can allow each digital transmission and reception port to design its analog beamforming vectors based on knowledge of the channel matrix $\Hb_{n,m}$ alone in order to maximize the Signal to Noise Ratio (SNR).
\begin{equation}\label{eq:bfdesign}
\begin{split}
 \w_{n, m}^{\mathrm{r}},\w_{n, m}^{\mathrm{t}}=\arg \max& \|(\w^{\mathrm{r}})^H\Hb_{n, m}\w^{\mathrm{t}}\|^2 \\
 \textnormal{s.t. }&\quad\|\w_{n, m}^{\mathrm{r}}\|^2=1,\\
 &\quad\|\w_{n, m}^{\mathrm{t}}\|^2=1
 \end{split}
\end{equation}
We assume that the channel matrices remain constant for the entire duration of the scheduling algorithm. Since the channel is essentially static and beamforming does not depend on interference, each node $n$ can obtain its set of neighbors $\mathcal{A}(n)$ and compute the beamforming vectors $ \w_{m, n}^{\mathrm{r}},\w_{n, m}^{\mathrm{t}}, m \in \mathcal{A}(n)$ before the start of the scheduling process using schemes such as \cite{Barati2015}. 

We model $g_{n, m}$ and $\Hb_{n, m}$ according to \cite[Sec. III]{Akdeniz2014}. The macroscopic pathloss of each link with distance $d(n,m)$ is generated in two steps. First, for each link, a state distribution is generated with three states: Outage (OUT), Line of Sight (LOS) and Non-LOS (NLOS). Second, the pathloss of the link at distance $d$ is calculated depending on its state as follows:
\begin{equation}
\begin{split}\label{eq:pathloss}
 g_{n, m}(\textnormal{dB})&=\begin{cases}
		\infty \\ \quad\textnormal{ w.p. }p_{OUT}=1-\min(1,e^{5.2-0.0334d(n,m)})\\
		61.4+20\log_{10}(d)+ \log \mathcal{N}(0,5.8)\\ \quad\textnormal{ w.p. }p_{LOS}=(1-p_{OUT})e^{-0.0149d(n,m)}\\
		72+29.2\log_{10}(d)+ \log \mathcal{N}(0,8.7)\\  \quad\textnormal{ w.p. }1-p_{OUT}-p_{LOS}
              \end{cases}\\
\end{split}
\end{equation}

The small scale fading matrix $\Hb$ is the sum of a small number of planar waves, where each wave corresponds to one reflection in the scattering environment. This sum is expressed as
\begin{equation}
\Hb_{n,m}=\frac{1}{L}\sum_{k=1}^{N_c}\sum_{\ell=1}^{N_p}g_{k\ell}\ab_r(\theta_r^k+\theta_r^{\ell})\ab_t^{T}(\theta_t^k+\theta_t^{\ell})
\end{equation}
where $N_c\sim \textnormal{Poisson}(1.9)$ is the number of independent scattering clusters.  Each cluster is a bundle of $N_p=20$ paths with similar spatial characteristics but independent gains. The rays of each cluster leave the transmitter with a mean Angle of Departure (AoD) $\theta_r^{k}\sim U[0,2\pi)$ and arrive at the receiver with mean Angle of Arrival (AoA) $\theta_t^{k}\sim U[0,2\pi)$. Each path in the cluster has its individual AoD and AoA separated from the cluster mean angles by a wrapped-Gaussian distribution with mean square angular spread $\theta_{RMS}\sim \textnormal{Exp}(10^o)$. The path angular variations are generated as a wrapped Gaussian $\theta_t^{\ell},\theta_r^{\ell}\sim \textnormal{Wrapped}(\mathcal{N}(0,\theta_{RMS}))$. Finally, for each path in each cluster, the model generates an independent scalar random fading gain $g_{k\ell}\sim\mathcal{CN}(0,1)$ and a spatial signature vector for the antenna arrays that depends on the angles. The $N\times1$ Uniform Linear Array (ULA) with elements separated by half a wavelength has spatial signature vector
$$\ab_\Box(\theta_\Box)=\frac{1}{\sqrt{N_\Box}}\left(0, e^{-j\pi\sin(\theta_\Box)},\dots, e^{-j\pi\sin(\theta_\Box) (N_\Box-1)}\right)^T$$
where the box $\Box$ represents that we can use this expression for both subindices $t$ and $r$. In this paper we assumed the use of $N\times N$ Uniform Planar Arrays (UPA) formed by $N$ ULAs separated by half a wavelength, where the total number of antenna elements is $N_a=N^2$ and the signature is given by $\ab_\Box(\theta_\Box^{Azimuth})\otimes\ab_\Box(\theta_\Box^{Elevation})$ where $\otimes$ is the Kroenecker product, $\theta_\Box^{Elevation}$ and $\theta_\Box^{Azimuth}$ are the azimuth and elevation angles, respectively, and $\Box$ may be either $t$ or $r$.

In this paper we are studying scheduling in a large network without interference, and adopt the beamforming vectors \eqref{eq:bfdesign} as an ``accurate enough'' physical layer model. However, typical implementations are subject to hardware constraints and use limited beamforming codebooks \cite{Xiao2016,Xiao2017a}. Moreover, we have considered independent processing on each signal port of the receiver. In typical implementations an additional layer of $K\times K$ digital MU-MIMO processing is necessary in order to ensure that $I_{n,m}(\pp(t))$ is as weak as necessary, however since mmWave channel matrices are rank-deficient, the ``cross-beam interference'' between mismatched beamforming transmitter-receiver pairs is very small most of the time and the simulation results using \eqref{eq:bfdesign} are quite accurate. In network-level rate and scheduling analyses of mmWave the assumption that interference is small and can be neglected has been extensively supported, e.g., see \cite{Mudumbai2009,juanScheduling}.

\section{Proof of NUM and Throughput Optimality}
\label{app:PAC}

Proofs for results very similar to Theorems \ref{the:MBP} and \ref{the:NUMCC} appear often in NUM literature \cite{Tass,Eryilmaz2007,Kelly1997,Kelly1998,ModianoPower}. The main argument traces back to the analysis of ergodic Markov chains by Tweedie \cite{Tweedie83Markov}. In this appendix we combine ideas from two extensions in the literature. In \cite{Eryilmaz2007,Eryilmaz2010Implementation} the throughput optimality is demonstrated for a multi-hop network with an ``arbitrary interference model scheduling constraint,'' but only for fixed rate links ($r_{n,m}=1$). On the other hand, in \cite{ModianoPower} a proof is given for ``random power allocation,'' which means that link rates are also random, but only for single-hop networks with a specific form of 2-hop scheduling constraints. We combine the techniques used to extend the proof to multi-hop arbitrary constraint and the variable-capacity scenarios to write our proof. In essence, to support multiple links at once/arbitrary constraints we take into account the maximum graph degree $A_{\max}$, whereas to support random power allocation/variable rate links we assume that the power allocations have a finite maximum value, so we can upper bound all the link rates by the supremum link rate of the network.

The proof begins by considering the joint variable $\uu(t)=(\q(t),\R(t))$ to represent a state of the network and scheduling system. This joint variable follows a Markov chain with some state space $\uu(t)\in\mathcal{M}$. We consider a Lyapunov function of the state of the system defined as 
\begin{equation}
\mathscr{L}(\uu(t))=
%
\|\q(t)\|^2=\q^T(t)\q(t)\geq \sum_{n,f} q_{n}^{(f)}(t)
\end{equation}

The goal is to show that if there is a finite highest link capacity, $R_{\max}\triangleq \max_{n,m} r_{n,m}(t)|_{p_{n,m}(t)=1}<\infty$, and finite highest graph degree, $A_{\max}=\max_n|\mathcal{A}(n)|<\infty$, the average Lyapunov drift
$$\mathscr{D}(\uu(t))=\Ex{t}{\mathscr{L}(\uu(t+1))-\mathscr{L}(\uu(t))|\uu(t)}$$
is always negative if $\uu(t)$ is contained in a certain set $\mathcal{B}^c$, where $\mathcal{B}\cup \mathcal{B}^c\triangleq\mathcal{M}$. When this holds, the Foster-Lyapunov criterion \cite{Tweedie83Markov} establishes that the irreducible Markov chain is positive recurrent, i.e., the average return time to states  in the set $\mathcal{B}$ is finite.

We prove this criterion for $\mathcal{B}$ such that the queues are bounded $\mathcal{B}\subset\{\uu(t):\|\q\|_1\leq \sqrt{B}\}$. Moreover, we define $\mathcal{B}$ such that the difference between the achieved long term average utility and \eqref{eq:EPSproblem} vanishes. Thus, the network can be modeled by a Markov process that on average takes a finite number of transitions to return to states with bounded queues (network stability) and maximum network utility.



We introduce \eqref{eq:qupdate} into $\mathscr{D}(\uu(t))$, and drop the time index $(t)$ for clarity:
$$\mathscr{D}=\Ex{}{\|\q+\D(\R^T-\R)\one_{NF,1}+\D\ab\|^2-\|\q\|^2}$$

Next, we expand the square sum and set $\D=\I$ to upper bound $\mathscr{D}$, producing
\begin{equation}
\begin{split}
   \mathscr{D}&\leq\underset{ 0}{\underbrace{\Ex{}{\|\q\|^2-\|\q\|^2}}}+\underset{\leq N^2R_{\max}+NA_{\max}R_{\max}}{\underbrace{\Ex{}{\|(\R^T-\R)\one_{NF,1}+\ab\|^2}}}
   \\   & \quad
   +2\Ex{}{\q^T\left[(\R^T-\R)\one_{NF,1}+\ab\right]}\\
  \end{split}
\end{equation}

We add and subtract $\Ex{}{2V\one_{1,NF}\mathscr{U}(\ab)}$. This can be omitted for throughput optimality with fixed rate $\lambdab\in\mathcal{X}$, i.e., to prove Theorem \ref{the:MBP}. We define $C_1\triangleq N^2R_{\max}+NA_{\max}R_{\max}$ so
\begin{equation}
\begin{split}
   \mathscr{D}&<C_1+\Ex{}{2V\one_{1,NF}\mathscr{U}(\ab)}
   \\&\quad
   -\Ex{}{\Ex{}{2V\one_{1,NF}\mathscr{U}(\ab)}-2\q^T\ab}
   \\&\quad
   +2\Ex{}{\q^T(\R^T-\R)\one_{NF,1}}\\
  \end{split}
\end{equation}

We introduce $\x^{V}$, the solution to the approximate problem \eqref{eq:EPSproblem}. By definition, $\x^{V}$ maximizes the absolute value of the third term. Moreover, the MBP scheduler selects the rate matrix $\R_{\textnormal{MBP}}$.
\begin{equation}
\small
\begin{split}
   \mathscr{D}&\leq C_1+\Ex{}{2V\one_{1,NF}\mathscr{U}(\ab)}
   \\   &\quad
   -\Ex{}{\Ex{}{2V\one_{1,NF}\mathscr{U}(\x^{V})}-2\q^T\x^{V}}
   \\   &\quad
   +2\Ex{}{\q^T(\R_{\textnormal{MBP}}^T-\R_{\textnormal{MBP}})\one_{NF,1}}\\
   &=
   C_1+\Ex{}{2V\one_{1,NF}\mathscr{U}(\ab)-2V\one_{1,NF}\mathscr{U}(\x^{V})}
   \\   &\quad
   +2\Ex{}{\q^T(\R_{\textnormal{MBP}}-\R_{\textnormal{MBP}}^T)\one_{NF,1}-\q^T\x^{V}}
  \end{split}
\end{equation}

Assuming that $\x^{V}\in\text{Int}\{\mathcal{X}\}$, there exists some convex linear combination of link rates $\R^V\in \text{Co}\{\R(\pp(t))\;\forall\pp(t)\in\mathcal{P}\}$ such that the net traffic in the source is $((\R^V)^T-\R^V)\one_{NF,1}-\epsilon \one_{NF,1}=\x^{V}$ for some small positive $\epsilon$. By contradiction, if no such linear combination existed, then at least one queue would grow to infinity, the network would not be stable, and $\x^{V}\notin\mathcal{X}$.

Finally, we note that $\min_{\pp} \q^T(\R^T-\R)\one_{NF,1}$ is equivalent to $\max_{\pp} \q^T(\R-\R^T)\one_{NF,1}$ and both are MBP in \eqref{eq:MBP}, therefore $\q^T(\R_\text{MBP}^T-\R_\text{MBP})\one_{NF,1}<\q^T((\R^V)^T-\R^V)\one_{NF,1}$, and
\begin{equation}
\begin{split}
\label{eq:finalD1}
   \mathscr{D}&\leq C_1+\Ex{}{2V\one_{1,NF}\mathscr{U}(\ab)-2V\one_{1,NF}\mathscr{U}(\x^{V})}
   \\   &\quad
   -\Ex{}{\epsilon \q^T\one_{NF,1}}\\
  \end{split}
\end{equation}

 The second term is introduced for Theorem \ref{the:NUMCC} only. If we ignore it, we obtain the proof of Theorem \ref{the:MBP} for static traffic contained in the throughput capacity region: We use that $(\q^T\one_{NF,1})^2\geq\|\q\|^2=\mathscr{L}(\uu(t))$. Define the constant $B=\left(\frac{C_1}{\epsilon}\right)^2$, if $\mathscr{L}(\uu(t))>B$ then $\q^T\one_{NF,1}>\sqrt{B}$ and $\mathscr{D}$ is negative. Thus, $\uu(t)\to\uu(t+1)$ is positive recurrent with $\mathcal{B}\subset\{\uu(t):\mathscr{L}(\uu(t))\leq B\}$.

For Theorem \ref{the:NUMCC} we need to prove NUM as well as stability. The second term is upper bounded by $2VN\mathscr{U}(R_{\max}A_{\max})$, and the proof of stability is the same but considering the expression
\begin{equation}
 \label{eq:D1compact}
 \mathscr{D}\leq C_1+2VN\mathscr{U}(R_{\max}A_{\max})-\epsilon \|\q\|_1
\end{equation}

The proof is thus almost complete. To show that the rate is optimal we revise \eqref{eq:finalD1} and note that $\Ex{}{\epsilon \q^T\one_{NF,1}}>0$. So the drift also becomes negative if $ \Ex{}{\one_{1,NF}\mathscr{U}(\x^{V})-\one_{1,NF}\mathscr{U}(\ab)}>\frac{C_1}{V}$. Thus the Foster-Lyapunov criterion is met and $\uu(t)\to\uu(t+1)$ is positive recurrent with
$$\mathcal{B}\subset\left\{\uu(t):
\begin{array}{c}
  \|\q(r)\|^2\leq \left(\frac{C_1+2VN\mathscr{U}(R_{\max}A_{\max})}{\epsilon}\right)^2\\
  \Ex{}{\one_{1,NF}\mathscr{U}(\x^{V})-\one_{1,NF}\mathscr{U}(\ab)}\leq\frac{C_1}{V}
\end{array}\right\}.$$

Here, the arbitrarily small number $\epsilon>0$ represents that the queue length bound is always finite, but becomes higher when the rate vector gets closer to the border of the throughput capacity region, and the tuning number $V$ lets us choose how close to the optimal rates we want to get while we have to take into account that the queues are allowed to grow even more as we increase $V$.



\begin{IEEEbiography}
[{\includegraphics[width=1in,height=1.25in,clip,keepaspectratio]{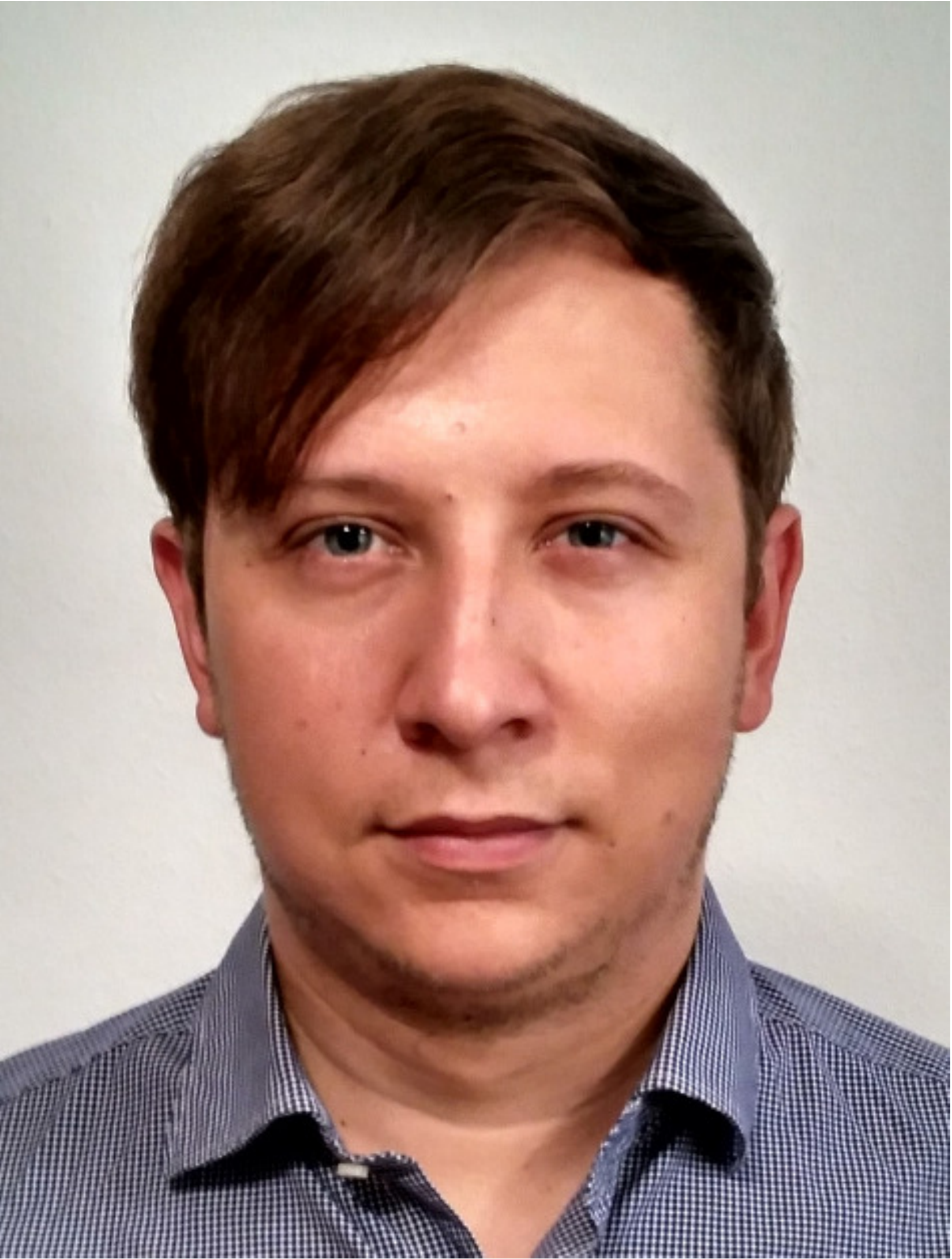}}]
{Felipe G\'omez-Cuba (M'11)}
received the Ingeniero de Telecomunicaci\'on degree in 2010, M.Sc in Signal Processing Applications for Communications in 2012, and a PhD degree in 2015 from the University of Vigo, Spain. He worked as a researcher in the Information Technologies Group (GTI), University of Vigo, (2010--2011), the Galician Research and Development center In Advanced Telecommunications (GRADIANT),  (2011--2013), the NYUWireless center at NYU Tandon School of Engineering (2013--2014) and in University of Vigo with the FPU grant from the Spanish MINECO (2013--2016). He received a Marie Curie Individual Fellowship - Global Fellowship with the Dipartimento d'Engegneria dell'Informazione, University of Padova, Italy (2016-2019) and he was a postdoctoral scholar with the Department of Electrical Engineering, Stanford University, USA (2016-2018). He has been awarded a Distinguished Researcher position with the Beatriz Galindo fellowship in the Departamento de Tecnología Electrónica y de las Comunicaciones (TSC), University of Vigo (2020-2024). His main research interests are new paradigms in wireless networks such as cooperative communications, ultra high density cellular systems, wideband communications and massive MIMO.
\end{IEEEbiography}

\begin{IEEEbiography}
[{\includegraphics[width=1in,height=1.25in,clip,keepaspectratio]{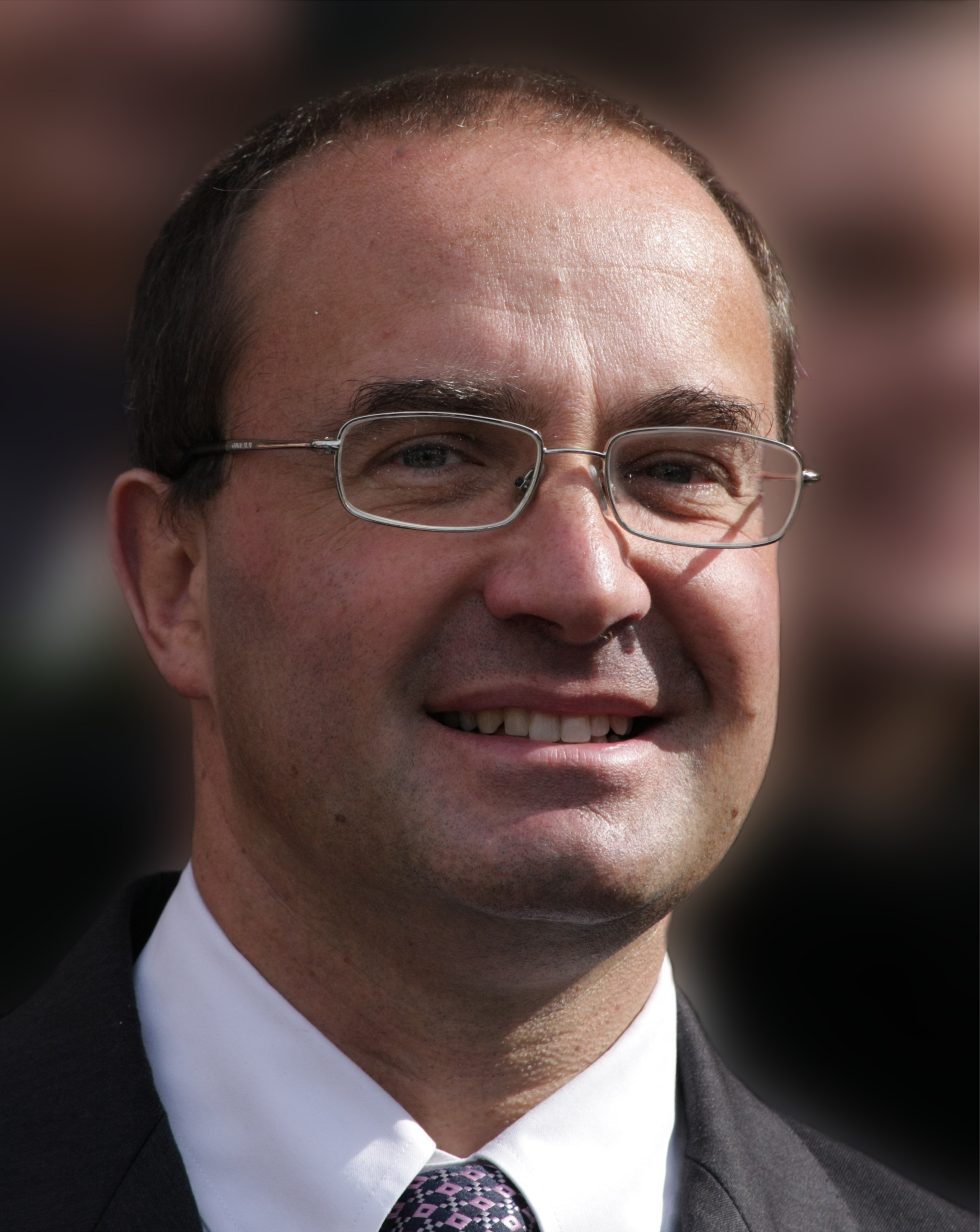}}]
{Michele Zorzi (F'07)}
received the Laurea and Ph.D. degrees in electrical engineering from the University of Padova in 1990 and 1994, respectively. From 1992 to 1993, he was on leave at the University of California at San Diego (UCSD). In 1993, he joined the Faculty of the Dipartimento di Elettronica e Informazione, Politecnico di Milano, Italy. After spending three years with the Center for Wireless Communications, UCSD, he joined the School of Engineering, University of Ferrara, Italy, in 1998, where he became a Professor in 2000. Since 2003, he has been with the Faculty of the Information Engineering Department, University of Padova. His current research interests include performance evaluation in mobile communications systems, WSN and Internet of Things, cognitive communications and networking, vehicular networks, 5G mm-wave cellular systems, and underwater communications and networks. He was a recipient of several awards from the IEEE Communications Society, including the Best Tutorial Paper Award in 2008, the Education Award in 2016, and the Stephen O. Rice Best Paper Award in 2018. He was the Editor-In-Chief of the IEEE W IRELESS COMMUNICATIONS from 2003 to 2005, the IEEE
TRANSACTIONS ON COMMUNICATIONS from 2008 to 2011, and the IEEE TRANSACTIONS ON COGNITIVE COMMUNICATIONS AND NETWORKING from 2014 to 2018. He has served as a Member-at-Large for the Board of Governors of the IEEE Communications Society from 2009 to 2011 and the
Director of Education from 2014 to 2015.

\end{IEEEbiography}

\end{document}